\newcommand{\downarrowr}{\textcolor{red!70!black}{$\downarrow$}}
\newcommand{\uparrowg}{\textcolor{green!60!black}{$\uparrow$}}
\newcommand{\ambarrow}{\textcolor{orange!80!black}{$\updownarrow$}}
\begin{document}
\setlength{\droptitle}{-2.5cm}
\title{The Free Option Problem of ePBS}

\author{Bruno Mazorra$^*$,
Burak Öz$^*$,
Christoph Schlegel$^*$, and
Fei Wu$^\dagger$
 \medskip
  \\
  \small
  $^*$Flashbots \\
  \small 
  $\dagger$King's College London \\
}

\date{Initial Version: September 20, 2025 \\
Current Version: January 8, 2026}

\maketitle

\thispagestyle{empty}

\begin{abstract}
Ethereum’s upcoming Glamsterdam upgrade introduces EIP-7732 \emph{enshrined Proposer–Builder Separation} (ePBS), which improves the block production pipeline by addressing trust and scalability challenges. Yet it also creates a new liveness risk:
builders gain a short-dated ``free'' option to prevent the execution payload they committed to from becoming canonical, without incurring an additional penalty. Exercising this option renders an empty block for the slot in question, thereby degrading network liveness.

We present the first systematic study of the free option problem. Our theoretical results predict that option value and exercise probability grow with market volatility, the length of the option window, and the share of block value derived from external signals such as external market prices. The availability of a free option will lead to mis-pricing and LP losses. The problem would be exacerbated if Ethereum further scales and attracts more liquidity. Empirical estimates of values and exercise probabilities on historical blocks largely confirm our theoretical predictions. While the option is rarely profitable to exercise on average (0.82\% of blocks assuming an $8$-second option time window), it becomes significant in volatile periods, reaching up to 6\% of blocks on high-volatility days—precisely when users most require timely execution. Moreover, builders whose block value relies heavily on CEX–DEX arbitrage are more likely to exercise the option. We demonstrate that mitigation strategies—shortening the option window or penalizing exercised options—effectively reduce liveness risk.

\smallskip\noindent\textbf{Keywords:} Ethereum, ePBS, MEV, Liveness.
\end{abstract}

\hspace{0.5cm}

\section{Introduction}
Ethereum's block production pipeline has evolved with Proposer–Builder Separation (PBS), 
currently implemented out-of-protocol via \emph{MEV-Boost} \cite{mevboost}, emerging as the leading paradigm for block building on Ethereum. In MEV-Boost, \emph{relays}, operated by centralized third parties, facilitate an auction by connecting proposers and builders. Relays forward builder bids and block headers for proposers to choose from and sign, before publishing the selected block at the slot start. While successful—with over 90\% of blocks outsourced to builders through MEV-Boost auctions \cite{mevboostpics}—this design relies on off-chain trust in relays, which can harm validators and threaten network liveness \cite{bloxroutebug}. 

To remove this dependency, EIP-7732, enshrined PBS (ePBS) \cite{eip7732}, is scheduled for inclusion in Ethereum’s upcoming Glamsterdam upgrade. Under ePBS, builders are staked on the network, and their bid is directly debited from their balance if they win the auction. The proposer publishes a beacon block and commits to the builder's block header (fixing the execution payload and blobs) at slot start, while the builder is granted additional time to reveal the full payload and blobs later in the slot. This removes the requirement for proposers to trust relays.

As another feature, ePBS introduces ``pipelining'' to scale Ethereum throughput. Builders are expected to deliver two types of data: a small execution payload requiring (EVM) computation, and larger blobs that do not necessitate much computation, but take longer to propagate. The scaling benefit is realized by separating the deadline by which the builder must deliver the execution payload from the deadline for blob delivery, which is shifted later into the slot. This allows more time for blob propagation without compromising the time allotted to payload execution.

However, this combination of features—making builders responsible for block distribution through enshrining the builder role in protocol,\footnote{Under MEV-boost, relays distribute blocks. To create a similar free option under MEV-boost, the builder and relays would need to collude. In ePBS, the builder can unilaterally decide to withhold block content.} and introducing dual deadlines—creates a new liveness threat: Builders obtain a short ``option window'' in which they can choose to prevent the block from becoming canonical through withholding blobs.\footnote{The builder can also make the block non-canonical by withholding the payload. However, as the blob deadline is substantially later, this is, as we argue, a less valuable option.} To build intuition why this might be lucrative, consider a casino where you can revert any roulette spin if the outcome is unfavorable to you. Such an option is extremely valuable as it offers a strictly positive risk-free expected return. ePBS grants the builder a similar option: if adverse information arrives after slot start—for example, unfavorable ETH price movements that makes the builder re-evaluate their on-chain trades—they can withhold a blob to make the payload non-canonical, thereby avoiding losses. The slot then remains empty, and no state transition occurs, degrading network liveness, on-chain market efficiency, and overall user experience, yet the builder faces no protocol-level penalty. This is known as the \emph{free option problem} of ePBS \cite{freeoption}.

In this paper, we provide the first systematic study of this free option problem. We make theoretical predictions about position sizes, on-chain prices, option values and exercise probabilities (cf. \Cref{sec:model_theory}). We construct our empirical methodology (cf. \Cref{sec:methodology}) and validate our predictions by examining them on historical Ethereum blocks (cf. \Cref{sec:empirical}). Finally, we propose possible mitigations to the problem and evaluate them empirically (cf. \Cref{sec:mitigations}). We summarize our contributions as follows:
\begin{enumerate}[topsep=2pt, itemsep=1pt, parsep=1pt, leftmargin=*]
   \item Option value and exercise probability increase with volatility, the length of the option window, and the value share of MEV generated from external market signals. The availability of a free option will lead to mis-pricing and LP losses.
    \item Assuming an $8$-second option time window, exercise probabilities are low on average in historical blocks (0.82\% missed block rate, 34.6\% of all missed blocks), but high during periods of high volatility (daily missed blocks of 5.5\% and 80\% of all missed blocks (cf.~\Cref{fig:liveness})).
    \item The option is more profitable in blocks where the value of CEX-DEX arbitrages contributes to a larger share of total block value. 
    \item We propose two mitigation strategies—shortening the option window and penalizing builders for exercising the option—and validate that both effectively reduce exercise probabilities.
\end{enumerate}

\begin{figure}[t]
  \centering
  \begin{subfigure}[t]{0.49\textwidth}
  \centering
  \includegraphics[width=\linewidth]{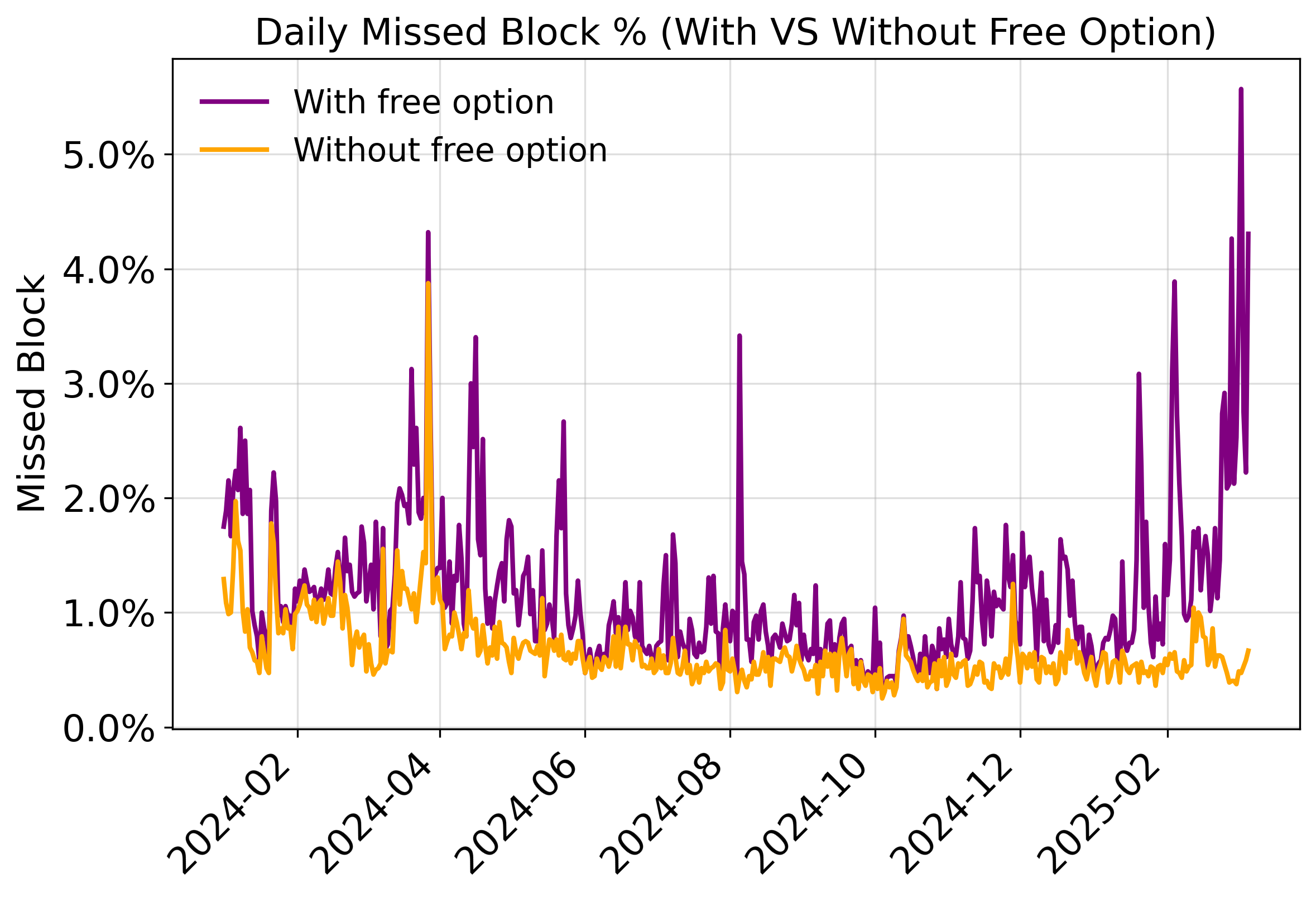}
  \end{subfigure}
  \begin{subfigure}[t]{0.49\textwidth}
  \centering
  \includegraphics[width=\linewidth]{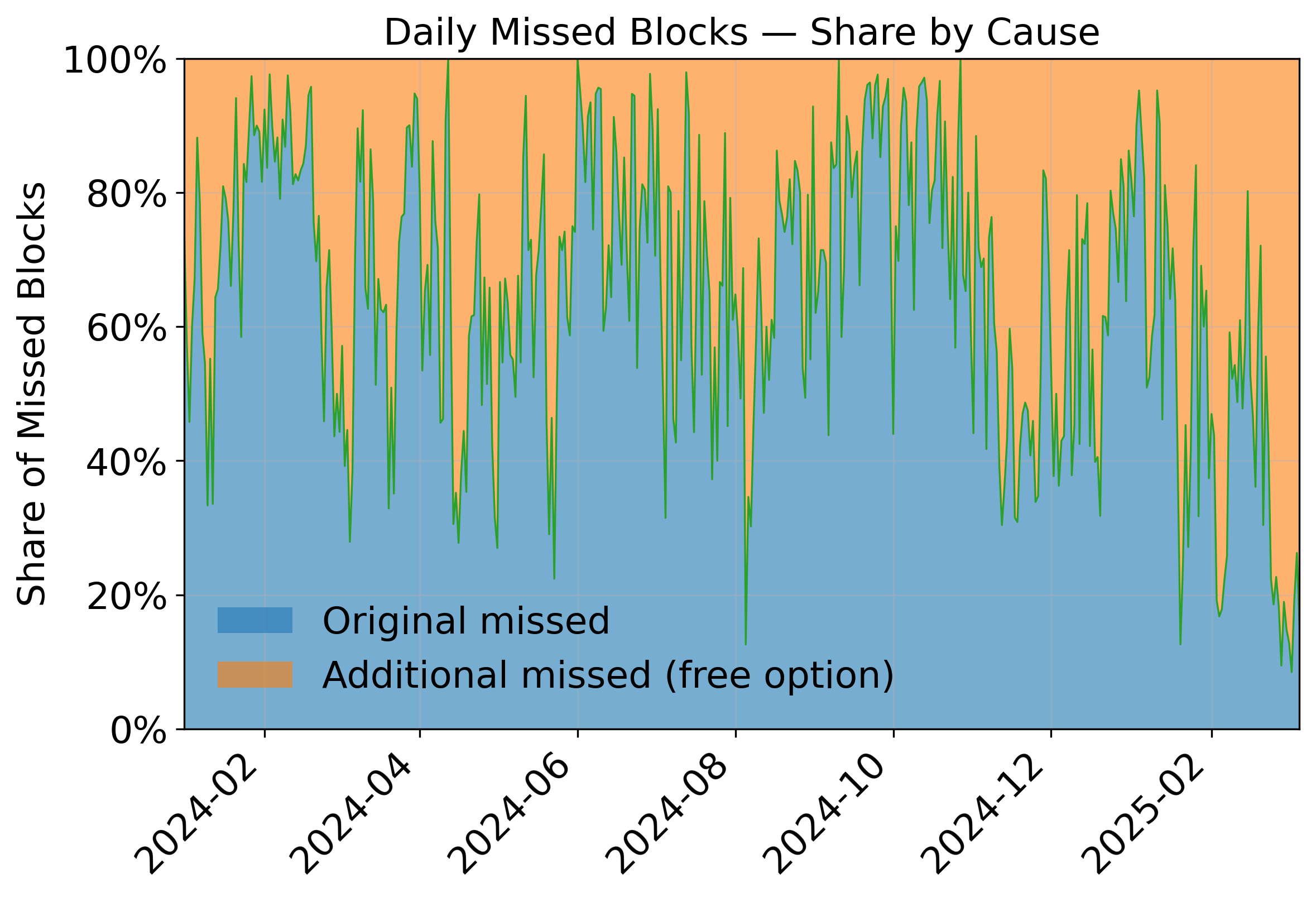}
  \end{subfigure}
  \caption{Daily missed block percentage with and without the ePBS free option and share of daily missed blocks by original causes and by the ePBS free option.}
  \label{fig:liveness}
\end{figure}
\subsection{Background and Related Work}

\parhead{Ethereum.} Ethereum progresses in 12-second slots: in each slot, one validator (the \emph{proposer}) publishes a block, while others (\emph{attestors}) attest to the proposed block. A block consists of a consensus-layer \emph{beacon block} and an execution-layer \emph{payload}. The beacon block advances consensus and includes attestations and a cryptographic commitment to the execution payload, while the execution payload specifies the ordered transactions and their resulting state transitions. To improve Ethereum scalability, EIP-4844 upgrade \cite{4844} in March 2024 introduces \emph{blob (type-3) transactions}, which allow Layer-2 (L2) rollups to publish blobs of data on Ethereum Layer-1 (L1) for transaction settlement. Blob data is propagated through the consensus layer network and retained for a short period ($\approx$ 18 days); blob transactions propagated through the execution layer network include only references to associated blobs.

\parhead{Proposal-Builder Separation.} PBS decouples block production by allowing proposers to outsource block building to \emph{builders}. Today, PBS operates out-of-protocol via MEV-Boost \cite{mevboost}: builders submit blocks with bids to trusted intermediaries known as \emph{relays}, which forward only the block header with the highest bid to the proposer; once the proposer signs at the slot start, the relay reveals the full block. While MEV-Boost improves validator decentralization \cite{bahrani2024centralization}, it introduces off-chain trust and liveness dependencies in relays \cite{bloxroutebug}.

\emph{Enshrined Proposer-Builder Separation} (ePBS) or EIP-7732 \cite{eip7732} integrates the builder–\\proposer exchange into the protocol and eliminates relay trust. Specifically, with ePBS, the proposer publishes a beacon block at the slot start ($t=0$s) with the commitment to the header of the execution payload built by the winning builder. Builders are staked on the mainnet, and the winning builder's bid is directly removed from their beacon chain balance. After the beacon block is attested by $t=1.5$s, the winning builder subsequently reveals the full payload. A \emph{Payload Timeliness Committee} (PTC)—a subset of attestors—votes on whether the payload and associated blob data arrived in time \cite{ptc,ElowssonDualPTC}. At $t = 4$s, the PTC first enforces the \emph{payload deadline}: if the execution payload is not observed, the slot is flagged as empty and no state transition occurs. At $t = 10$s, the PTC enforces the \emph{blob deadline} and votes positively only if both deadlines are satisfied; otherwise, missing blobs or insufficient votes again prevent the payload from becoming canonical and leave the slot empty. \Cref{fig:slot_pipeline} illustrates this pipeline.\footnote{The time indications follow the dual-deadline design in~\cite{ElowssonDualPTC}, which remains subject to change.} The design also improves the network scalability by allowing more time for payload execution and blob propagation. 


\smallskip
\parhead{Maximal Extractable Value.} 
Maximal Extractable Value (MEV) refers to the additional revenue that block participants can capture by strategically ordering, including, or excluding transactions \cite{daian2020flash}. MEV can be extracted purely from on-chain state (atomic MEV) \cite{qin2021liqui,qin2022quantifying}, or by combining on-chain and cross-domain signals (non-atomic MEV) \cite{obadia2021unity,liobanonatomic,burakcrosschainarb}, such as prices on a centralized exchange (CEX). Under PBS, MEV is primarily realized by specialized \emph{searchers}, who bribe builders through priority fees and transfers alongside their transactions \cite{mamageishvili2024searcher}. A prominent category is arbitrages between a centralized and a decentralized exchange (DEX), which, despite occupying only about 2\% of block space, contributes roughly 20\% of block value \cite{whowinsandwhy,liobanonatomic}. 
In practice, most leading CEX–DEX searchers are integrated with a major block builder \cite{liobanonatomic,whowinsandwhy,wu2025measuringcexdex}, consolidating both execution and block building advantages.

\smallskip
\parhead{Related Work.}
Previous research on PBS spans empirics and game-theoretic modeling. \cite{bahrani2024centralization} demonstrated that PBS enhances validator decentralization but introduces centralization concerns among builders. Empirically, \cite{schwarz2023time,heimbach2023ethereum} analyzed the post-PBS block-building market, highlighting emerging concentration and censorship pressures. \cite{gupta2023centralizing,liobanonatomic} demonstrated the advantage of builders integrated with CEX-DEX searchers in block building at times of high volatility. Together with \cite{liobanonatomic,whowinsandwhy}, \cite{wu2025measuringcexdex} provided empirical evidence that leading block builders are integrated with a CEX-DEX searcher. The work further revealed that profit-sharing arrangements between searchers and builders are closely tied to their degree of integration. The downstream effects of vertical integration and builder concentration were further examined in \cite{yang2025decentralization}, which studied implications for proposer revenue and block optimality. Finally, \cite{schwarz2023time,oztime2023} revealed the proposer's strategic behavior of delaying block proposal to optimize reward under PBS, which can potentially contribute to missed slots and affect network liveness.
\section{The Free Option Problem}
\label{sec:model_theory}

Under the current MEV-Boost architecture, the relay releases the full block at the slot start, leaving the builder no discretion after the proposer's commitment to the block. In ePBS, by contrast, the winning builder can unilaterally decide not to reveal the execution payload or the blobs by their respective PTC deadlines, thus rendering the slot empty.

When the block value is solely determined by the Ethereum L1 state, 
the builder has no incentive to withhold under the unconditional bid. However, if the builder's reward depends on cross-domain information that may arrive in the window between the proposer commitment and the PTC deadlines, the builder can condition the reveal decision on the late information. 
To do this, the builder prepares a blob but keeps it private, retaining the option to release or withhold it by the deadline. Since the bid is due regardless, the builder reveals when inclusion is favorable and withholds otherwise, treating the bid as a sunk cost. Thus, ePBS effectively grants the builder a short-dated \emph{free option} to prevent the payload from becoming canonical by withholding the payload or blobs after commitment, \emph{with no additional protocol penalty}. \Cref{fig:slot_pipeline} illustrates the window granted to the builder to decide whether to exercise the option. Given that currently blobs need roughly 2 seconds to be propagated \cite{migalabs}, the builder has a window of roughly 8 seconds to decide whether to exercise the free option.

\begin{figure}[t]
    \centering
    \includegraphics[width=1\linewidth]{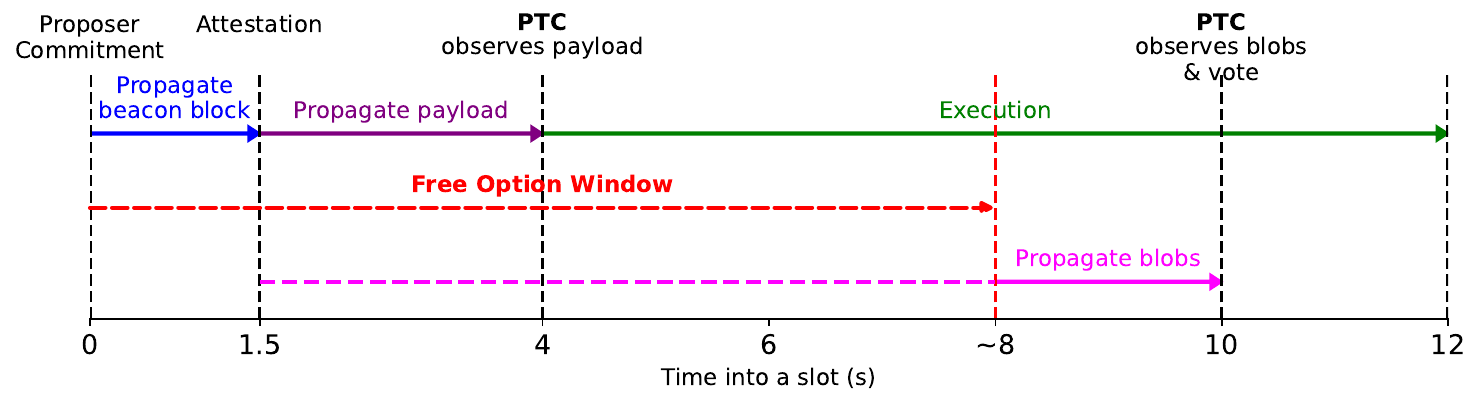}
    \caption{Slot pipeline and PTC deadlines under ePBS. Red markers indicate the free option window, which extends until the latest point when a strategic builder can release blobs while still ensuring the payload remains valid.}
    \label{fig:slot_pipeline}
\end{figure}

\paragraph{The free option with PeerDAS.}\label{appendix:PeerDas}
The free option strategy could take a different shape after Fusaka hard fork due to PeerDAS \cite{wagner2024documentation}. Instead of being forced to propagate $100\%$ of the blob data to guarantee data availability, the builder can strategically release only a subset of columns of the extended blob upfront (e.g., just below the threshold required for reconstruction) and withhold the rest. The builder then decides whether to release the other columns close to the PTC deadline. If they choose to release, it may suffice to propagate only a small additional fraction of the columns of the blob, which already enables full recovery even in the most adversarial setting. Otherwise, the builder simply withholds. In short, PeerDAS reduces the amount of data that must be delivered at the last moment, potentially shrinking the 2 seconds needed for propagating the blob to $\lesssim 400$ms plus reconstruction time,\footnote{Depending on reconstruction duration, the builder may alternatively release all remaining columns to minimize reconstruction time, at the cost of higher propagation delay.} and so, effectively increasing the length of the free option window.

\subsection{Trading Volatile Assets with an External Market}
Arguably, the most valuable use of the free option would be to revert DEX trades in case the position value moves against the builder. More specifically, a builder can add DEX transactions to the block and watch the price of the traded assets on an external market (usually a CEX). If it would no longer be worthwhile to execute the DEX trade(s), according to the external price signal, at the end of the option window, the builder can revert them by withholding the blobs to prevent the block from becoming canonical.

Placing the trade in the block with the option to revert is effectively equivalent to holding a short-dated option on the traded pair (instead of holding the spot/resp. a short-dated future in case of a block without optionality).  However, in comparison to a regular option traded in financial markets, the option is not restricted to a fixed notional value. Instead, the builder can, in principle, trade arbitrarily large positions on the DEX to construct an option with arbitrarily large notional. If liquidity close to the current market price would be unbounded this would give him arbitrarily large option value. In reality, however, on-chain liquidity is bounded and the value of the option is effectively bounded by the {\bf liquidity} available on chain. As option value is increasing in {\bf volatility}, the more liquidity there is for volatile assets, the more valuable the option is and the more likely it is to be exercised. Given these considerations, we subsequently derive the conditions for optimally sizing the DEX trade(s) and the resulting DEX prices, option value and exercise probability and establish how volatility, liquidity and the amount of atomic MEV influence them.

We assume for simplicity that there is only one risky asset that can be traded for the numéraire on CEXes and DEXes.  The arguments would generalize to multiple assets.
We can decompose the profit at time $\tau$ of proposing a block at time $0$ as:
$$\Pi_\tau(y)=\mu+(1+r_\tau) y-P_{DEX}(y/P_0)=\Pi_0(y)+r_{\tau}y$$
where $\mu$ is the block value net of value generated from the position in the risky asset, $y$ is the position size, measured in the numéreire at CEX prices, of the risky asset traded on the DEX, $r_t:=\frac{P_t-P_0}{P_0}$ is the return of the risky asset (on the CEX) from time $0$ until $t$ and $P_{DEX}(x)$ is the total cost of the DEX trade at time $0$ as a function of trade size $x=y/P_0$. 

We assume that $E[r_t]=0$ but otherwise put no restrictions on the return distribution. Moreover, we assume (w.l.o.g.) that $P'_{DEX}(0)\leq P_0$, i.e. the risky asset is correctly priced or under-priced on the DEX. Analogous considerations would work in the case of overpricing, $P'_{DEX}(0)\geq P_0$, in which case the builder would sell rather than buy the risky asset on the DEX.

The builder solves:
$$V^*:=\max_{y}E[\max\{0,\Pi_{\tau}(y)\}]$$

which gives a corresponding {\bf exercise probability}
$$P^*=Pr[\Pi_{\tau}(y)<0].$$
We call the difference between the value and the block value without option (i.e. $V^*-\max_y \Pi_0(y)$), the {\bf net option value}.
\begin{remark}
The counter-party to a DEX trade would usually be the DEX liquidity providers (LP). In this case, the value captured through the option would be at the expense of both LPs and other arbitrageurs: Part of the arbitrage value due to stale DEX prices that is otherwise captured by arbitrageurs in the next block is now captured by the builder exercising the option. LPs forego fee income if price discrepancies are (partially) arbitraged through option exercise. 
\end{remark}
First we observe that at time of block commitment, with the option, DEXes will over-price the risky asset post-trade. 
\begin{proposition}\label{overprice}
The DEX-price post trade (at $\tau=0$) is strictly greater than the CEX  market price.
\end{proposition}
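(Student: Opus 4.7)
The plan is to characterize the optimal position $y^*$ via a first-order condition of $V(y) := E[\max\{0,\Pi_\tau(y)\}]$ and to read off the post-trade marginal DEX price $P'_{DEX}(y^*/P_0)$ directly from it.

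Define the exercise event $A(y) = \{\Pi_\tau(y) > 0\}$. Under mild regularity—most cleanly, assuming $r_\tau$ has no atoms, so the boundary $\{\Pi_\tau(y)=0\}$ is null—an envelope/dominated-convergence argument gives
$$V'(y) \;=\; E\!\left[\mathbf{1}_{A(y)}\!\left(1 + r_\tau - \tfrac{P'_{DEX}(y/P_0)}{P_0}\right)\right].$$
I will argue that the maximizer is interior, $0 < y^* < \infty$: the hypothesis $P'_{DEX}(0) \le P_0$ together with the strict convexity of $x\mapsto\max\{0,x\}$ (applied to a non-degenerate $r_\tau$) makes $V$ strictly increasing at $y=0$, while super-linear growth of $P_{DEX}$ forces $\Pi_\tau(y)\to-\infty$ and hence $V(y)\to 0$ as $y\to\infty$. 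Setting $V'(y^*)=0$ and using that the exercise set is non-trivial, $P(A(y^*))\in(0,1)$, the FOC rearranges to
$$\frac{P'_{DEX}(y^*/P_0)}{P_0} \;=\; 1 + E[r_\tau \mid A(y^*)].$$

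The remaining step is to show $E[r_\tau\mid A(y^*)] > 0$. Since $y^* > 0$, the exercise set has the form $A(y^*) = \{r_\tau > k\}$ with threshold $k = -\Pi_0(y^*)/y^*$. If $k \ge 0$, positivity is immediate because $r_\tau > k \ge 0$ on $A(y^*)$. If $k < 0$, I decompose $0 = E[r_\tau] = E[r_\tau\mathbf{1}_{\{r_\tau>k\}}] + E[r_\tau\mathbf{1}_{\{r_\tau\le k\}}]$; the second summand is strictly negative, since $r_\tau \le k < 0$ on an event of positive probability, which forces $E[r_\tau\mathbf{1}_{\{r_\tau>k\}}]$, and hence $E[r_\tau\mid A(y^*)]$, to be strictly positive. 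Substituting back yields $P'_{DEX}(y^*/P_0) > P_0$ as claimed.

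The main obstacle is justifying the differentiation under the expectation at the kink of $\max\{0,\cdot\}$; the clean fix is to assume $r_\tau$ is atomless, and otherwise one must pass through one-sided derivatives or the subdifferential of the convex problem. A secondary delicacy is the non-triviality of the exercise set, $P(A(y^*))\in(0,1)$—this quietly requires that the support of $r_\tau$ actually straddles the threshold $k$, i.e., that the builder genuinely uses the option; degenerate returns collapse the strict inequality to an equality, consistent with the proposition's implicit premise that the option is economically active.
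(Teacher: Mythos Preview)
Your proof is correct and follows the same approach as the paper: derive the first-order condition $\tfrac{P'_{DEX}(y^*/P_0)}{P_0}=1+E[r_\tau\mid \Pi_\tau(y^*)>0]$ and conclude from strict positivity of the conditional expectation. The paper's own proof is in fact just that one line, so you have supplied the surrounding justification (interiority, differentiation under the expectation, the threshold argument for $E[r_\tau\mid r_\tau>k]>0$, and the non-degeneracy caveats) that the paper leaves implicit.
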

\begin{proof}
The first order condition for the builder is
$$\tfrac{P_{DEX}'(y^*/P_0)}{P_0}=1+E[r_\tau|\Pi_{\tau}(y^*)>0]>1.$$
\end{proof}

\begin{example}
If returns are normal with mean $0$ and volatility of $\sigma$, then
$$E[r_\tau|\Pi_\tau(y)>0]=\sigma\frac{\phi(z^*)}{1-\Phi(z^*)}\equiv\sigma\lambda(z^*)$$
where 
$z^*:=\frac{\Pi_0(y^*)}{\sigma y^*}$ and $\lambda(\cdot)$ is the hazard rate of the standard normal. If the AMM is a CPMM, the DEX price can be approximated~\cite{schlegel2025arbitrageboundedliquidity} by $$P_{DEX}(x)\approx  P'_{DEX}(0)x+P'_{DEX}(0)\tfrac{x^2}{L/P_0}\Rightarrow P'_{DEX}(x)\approx P'_{DEX}(0)+P'_{DEX}(0)\tfrac{2x}{L/P_0}$$ where $L$ is the value of the risky asset reserves of the AMM measured in the numéraire at CEX prices. If $\mu=0$ and $P'_{DEX}(0)=P_0$ the optimal position is
$$\lambda(z^*)=2z^*\Rightarrow y^*=z^*\sigma L\approx0.61\sigma L.$$
Thus, the position is linearly increasing in volatility and in AMM liquidity, and the price difference between the AMM price post-trade and the CEX price is
$$\tfrac{P'(x^*)-P_0}{P_0}=1.22\sigma.$$
More, generally, define the pre-trade price gap between the CEX and the DEX by $\delta:=\frac{P_0-P_{DEX}'(0)}{P_0}$. We can write the first order condition as
$$\lambda(z^*)+\tfrac{\delta}{\sigma}=2z^*.$$
Let $z_0:= 0.61$ be the solution for $\delta=0$. We can approximate:
$$z^*\approx z_0+\tfrac{1}{2-\lambda'(z_0)}\tfrac{\delta}{\sigma}\Rightarrow y^*\approx 0.61\sigma L+0.8\delta L.$$
Thus, the position is approximately linearly increasing in AMM liquidity and scales approximately linear in a weighted ratio of the volatility and the pre-trade price gap. 
\end{example}
 In general, we have the following findings which are straightforward to prove (see \Cref{appendix:proofs} for details):

\begin{proposition}\label{prop1:liquidity}
The value $V^*$ of the option and the probability of exercise $P^*$ increase in liquidity. 
\end{proposition}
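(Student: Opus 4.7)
My plan is to treat the two monotonicity claims separately. For $V^*$ non-decreasing in $L$, an envelope argument suffices. For any fixed $y$, the cost $P_{DEX}(y/P_0;L)$ is non-increasing in $L$ (more liquidity, smaller price impact), so $\Pi_\tau(y;L)=\mu+(1+r_\tau)y-P_{DEX}(y/P_0;L)$ is non-decreasing in $L$ for each realization of $r_\tau$. Taking the positive part, then the expectation over $r_\tau$, and finally the maximum over $y$ each preserves monotonicity, yielding $V^*(L_2)\geq V^*(L_1)$ whenever $L_2>L_1$. I do not expect any difficulty with this step.

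For $P^*$ non-decreasing in $L$, the situation is subtler: at a fixed position, raising $L$ actually \emph{lowers} exercise probability (the buffer $\Pi_\tau$ grows with $L$), so monotonicity must come from the re-optimization of $y^*(L)$. I would rewrite the exercise event as $\{r_\tau<\theta^*(L)\}$ where $\theta^*(L):=-\Pi_0(y^*(L);L)/y^*(L)$, so that $P^*(L)=G(\theta^*(L))$ for $G$ the c.d.f.\ of $r_\tau$, reducing the task to showing $\theta^*$ is non-decreasing in $L$. Combining the identity
\[
\theta^*(L)=\tfrac{P_{DEX}(y^*(L)/P_0;L)}{y^*(L)}-1-\tfrac{\mu}{y^*(L)}
\]
with the FOC of Proposition~\ref{overprice}, rewritten as $\Psi(\theta^*(L))=P_{DEX}'(y^*(L)/P_0;L)/P_0-1$ where $\Psi(\theta):=E[r_\tau\mid r_\tau>\theta]$ is strictly increasing, and using that $\Psi(\theta)-\theta$ is strictly decreasing in $\theta$ under log-concavity of the return density (covering the normal case), it then suffices to show
\[
\Psi(\theta^*(L))-\theta^*(L) = \Bigl[\tfrac{P_{DEX}'(y^*/P_0;L)}{P_0}-\tfrac{P_{DEX}(y^*/P_0;L)}{y^*}\Bigr]+\tfrac{\mu}{y^*}
\]
is non-increasing in $L$, where the bracketed ``marginal-minus-average cost'' term is non-negative by convexity of $P_{DEX}$.

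Monotone comparative statics on the FOC (more liquidity flattens the marginal-cost curve, pulling the intersection rightward) gives $y^*(L)$ non-decreasing in $L$, making $\mu/y^*(L)$ non-increasing. The main obstacle is the bracketed term: as $L$ grows, $y^*$ grows (pushing the bracket up) while per-unit impact shrinks (pushing it down), and I must show the net movement goes the right way. I expect the cleanest route is via a scale-homogeneity assumption $P_{DEX}(x;L)=L\,g(x/L)$, satisfied by CPMMs and more general constant-function AMMs; then both the bracket and the FOC depend on $L$ only through the ratio $y^*/L$, and the FOC collapses to a single scalar equation in this ratio whose solution traces a monotone curve in $L$. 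In the CPMM-normal case this reduces to $\lambda(z)^2/2-z\lambda(z)=2\mu/(\sigma^2 L)$ with $z=\theta^*/\sigma$, whose right-hand side strictly decreases in $L$ and whose left-hand side is strictly decreasing on the relevant interval $z\in(\tilde z,z_0]$ (verified via $g'(z)=\lambda(z)[(\lambda(z)-z)^2-1]$), directly delivering $z^*(L)$—and hence $P^*(L)=\Phi(z^*(L))$—non-decreasing in $L$.
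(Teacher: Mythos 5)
Your argument for the $V^*$ half reaches the paper's conclusion by a more elementary route: the paper differentiates $V^*$ via the envelope theorem and signs $\partial P_{DEX}/\partial L$ through convexity of the CFMM, whereas you note that cheaper execution at every fixed $y$ survives the positive part, the expectation, and the maximum over $y$; both rest on the same premise (a fixed trade costs weakly less when liquidity is deeper), so this half is the same in substance and needs no differentiability. For the $P^*$ half your route is genuinely different. The paper's proof is a one-line chain rule: $P^*$ is (one minus) the derivative of $V^*$ with respect to the cost level, $V^*$ is convex in that level, and the level moves with $L$; this is short and nominally fully general, but it compresses the fact that $L$ shifts the entire cost \emph{schedule}—and hence $y^*$ and the exercise threshold—which is exactly the tension you isolate (at fixed $y$, more liquidity makes exercise \emph{less} likely, so the result must come from re-optimization). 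Your threshold formulation, the FOC of Proposition~\ref{overprice}, the decreasing-mean-residual-life property of log-concave returns, and degree-one homogeneity of the CFMM cost give a transparent comparative-statics argument, and your CPMM–normal reduction $\lambda(z)^2/2-z\lambda(z)=2\mu/(\sigma^2 L)$ is algebraically the same scalar equation the paper itself uses in the proof of \Cref{prop3:vol} (there varied in $\sigma$ rather than $L$), so the endgame is sound; your derivative identity for the left-hand side is correct since $\lambda'(z)=\lambda(z)(\lambda(z)-z)$. Two caveats. First, you assert without proof that the optimizer sits on the decreasing branch $z\in(\tilde z,z_0]$: the FOC generically has two roots, and you need a second-order or sign-crossing argument (the objective's derivative is negative near $y=0$, so the smaller root is a local minimum and the larger—hence larger-$z$—root is the local maximum, and the map $z\mapsto y=\sigma\lambda(z)L/2$ is increasing), plus a word on the corner $y^*=0$ that is globally optimal when $L$ is small relative to $\mu$, where $P^*=0$ and monotonicity is preserved trivially. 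Second, outside the CPMM–normal benchmark your plan is only a sketch, so your proof is narrower than the proposition's unqualified statement; that said, the paper itself retreats to CPMM plus normal returns for the analogous exercise-probability claim in \Cref{prop3:vol}, and its general $P^*$ argument here is heuristic, so your more explicit treatment is a reasonable trade of generality for rigor.
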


Next we have that more dispersed returns (e.g. through higher volatility) lead to higher option value: 
\begin{proposition}
\label{prop2:return}
If returns are more dispersed in the sense that they are a mean-preserving spread, the value $V^*$ of the option increases. As dispersion is increasing in time, its value is increasing in the time to expiration.
\end{proposition}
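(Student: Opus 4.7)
The plan is to reduce the claim to the Rothschild–Stiglitz characterization of the convex (increasing dispersion) order, using that the payoff of the option is a convex function of the return.

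First, I would fix an arbitrary position $y$ and consider the map
\[
\phi_y : r \mapsto \max\{0,\, \Pi_0(y) + r y\}.
\]
Since $\phi_y$ is the pointwise maximum of the two affine functions $r \mapsto 0$ and $r \mapsto \Pi_0(y) + r y$, it is convex in $r$ (irrespective of the sign of $y$). The objective inside the expectation defining $V^*$ is exactly $\phi_y(r_\tau)$.

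Next I would invoke Rothschild–Stiglitz: if $\tilde r$ is a mean-preserving spread of $r$ (in particular $E[\tilde r]=E[r]$), then $E[\phi(\tilde r)] \geq E[\phi(r)]$ for every convex function $\phi$. Applied to $\phi_y$ for each fixed $y$, this gives
\[
E\!\left[\max\{0,\Pi_0(y)+\tilde r_\tau y\}\right] \;\geq\; E\!\left[\max\{0,\Pi_0(y)+r_\tau y\}\right].
\]
Taking the supremum over $y$ on both sides preserves the inequality, because the supremum of a pointwise dominated family is dominated by the supremum of the dominating family. Hence $V^*$ computed under the spread-out distribution is at least $V^*$ computed under the original distribution, proving the first assertion.

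For the claim about time to expiration, I would note that the standard return processes used to model asset prices (e.g., Brownian motion, or more generally any process with independent zero-mean increments) satisfy that $r_{\tau'}$ is a mean-preserving spread of $r_\tau$ whenever $\tau' > \tau$: writing $r_{\tau'} = r_\tau + (r_{\tau'} - r_\tau)$ with the increment independent of $r_\tau$ and of zero mean, conditional expectations yield $E[r_{\tau'} \mid r_\tau] = r_\tau$, which is precisely the defining property of a mean-preserving spread. The first part then immediately gives monotonicity of $V^*$ in $\tau$.

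The main obstacle, such as it is, is conceptual rather than technical: one must check that the $\max$ over $y$ step is legitimate even though the optimizer $y^*$ may shift when the distribution changes. This is handled by the elementary observation that $\sup_y f(y) \geq \sup_y g(y)$ whenever $f(y) \geq g(y)$ for all $y$, so no envelope argument or interchange of supremum and expectation is needed.
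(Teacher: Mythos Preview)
Your proposal is correct and follows essentially the same route as the paper: fix $y$, observe that $r\mapsto\max\{0,\Pi_0(y)+ry\}$ is convex, apply the Rothschild--Stiglitz/convex-order inequality for a mean-preserving spread, and then take the supremum over $y$. Your write-up is in fact slightly more careful than the paper's---you justify convexity explicitly as a maximum of affine maps, you spell out why the $\sup_y$ passes through the inequality, and you supply the martingale-increment argument for monotonicity in $\tau$ that the paper only asserts.
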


Monotonocity of exercise probability with dispersion does not hold in this full generality, but is generically true, as for example for normal returns:
\begin{proposition}\label{prop3:vol}
If returns are normally distributed with mean $0$ and volatility $\sigma$ and the DEX follows a constant product market maker, then the exercise probability $P^*$ is increasing in $\sigma$.
\end{proposition}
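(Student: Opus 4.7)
The plan is to reduce the claim to a single implicit equation for the standardized breakeven return $z^*$ and apply the implicit function theorem, with the sign of the critical partial derivative controlled by the builder's second-order condition.

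First, exploiting normality I would put the objective in closed form via the identity $\mathbb{E}[\max\{0,X\}] = m\Phi(m/s) + s\phi(m/s)$ for $X\sim\mathcal{N}(m,s^2)$. Taking $m = \Pi_0(y)$ and $s = \sigma y$ gives $V(y;\sigma) = \sigma y\,[\phi(z) + z\Phi(z)]$ with $z = -\Pi_0(y)/(\sigma y)$, which is the sign convention under which the example's FOC $\lambda(z^*) = 2z^*$ has a positive root. Differentiating in $y$ and using $\phi'(z) = -z\phi(z)$ to cancel the cross-term yields the FOC $\Pi_0'(y^*) = -\sigma\lambda(z^*)$. Since $P^* = \Pr[\Pi_\tau(y^*) < 0] = \Phi(z^*)$, the proposition reduces to showing that $z^*(\sigma)$ is nondecreasing.

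Next, I would substitute the CPMM profit $\Pi_0(y) = \mu + \delta y - (1-\delta)y^2/L$ into the FOC to obtain $y^* = L[\sigma\lambda(z^*) + \delta]/[2(1-\delta)]$, and plug this expression back into the definition of $z^*$. Algebraic simplification collapses the system into the single implicit equation
\[
    \lambda(z^*) - 2z^* \;=\; \frac{\delta}{\sigma} + \frac{4\mu(1-\delta)}{\sigma L\,[\sigma\lambda(z^*) + \delta]} \;=:\; H(z^*,\sigma).
\]
Inspection shows $\partial H/\partial\sigma \leq 0$ (strict whenever $\mu>0$ or $\delta>0$), while the classical bound $\lambda'(z) < 1$ on the normal hazard rate makes the left-hand side $g(z) := \lambda(z) - 2z$ strictly decreasing in $z$ with slope below $-1$. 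Setting $F(z,\sigma) := g(z) - H(z,\sigma)$, the implicit function theorem gives $dz^*/d\sigma = \partial_\sigma H/(g'(z^*) - \partial_z H)$; the numerator is nonpositive, and after substituting the FOC relation $\sigma\lambda(z^*) + \delta = 2(1-\delta)y^*/L$ the requirement $\partial_z F \leq 0$ at $z^*$ becomes $\lambda'(z^*)\bigl[1 + \mu L/((1-\delta)y^{*2})\bigr] \leq 2$, which is exactly the content of the second-order condition $V''(y^*) \leq 0$ for the builder's optimization. Combining these pieces yields $dP^*/d\sigma = \phi(z^*)\,dz^*/d\sigma \geq 0$.

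The main obstacle will be extracting this last inequality on $\lambda'(z^*)$ from the SOC: when $\mu > 0$, the positive $\mu$-correction in $\partial_z F$ can in principle overwhelm the dominant $\lambda'(z) - 2 < -1$ term, and ruling this out requires computing $V''(y^*)$ carefully (noting that $\max\{0,\Pi_\tau(y)\}$ is not concave in $y$ for each fixed $r_\tau$) and matching the resulting algebraic expression to the FOC-reduced form of $\partial_z F$. The edge case $\mu = \delta = 0$ produces the scale-invariant solution $\lambda(z^*) = 2z^*$ with $z^* \approx 0.61$ independent of $\sigma$, so $P^*$ is in fact constant there—confirming that the monotonicity stated in the proposition is necessarily weak rather than strict in general.
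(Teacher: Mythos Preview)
Your plan is correct—the identification that the needed sign condition $\partial_z F(z^*,\sigma)\le 0$ coincides with the builder's second-order condition is the right idea, and the algebra checks out (using the FOC relation $\sigma(\lambda-z^*)=(1-\delta)y^*/L+\mu/y^*$, the SOC $V''(y^*)\le 0$ indeed reduces to $\lambda'(z^*)\bigl[1+\mu L/((1-\delta)y^{*2})\bigr]\le 2$). The paper, however, takes a shorter path. Restricting to $\delta=0$, it does \emph{not} substitute the FOC expression for $y^*$ back into $z^*$—that is precisely the step that re-introduces $\lambda$ on the right-hand side of your implicit equation and forces you to invoke the SOC. Instead it solves the quadratic $z^*=u-\mu/(\sigma^2 L u)$ for $u:=y^*/(\sigma L)$ in terms of $z^*$ and inserts the positive root into the FOC $\lambda(z^*)=2u$, obtaining
\[
\lambda(z^*)\;=\;z^*+\sqrt{(z^*)^2+\tfrac{4\mu}{\sigma^2 L}},
\]
with $\lambda$ appearing on one side only. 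Monotonicity is then immediate: $\lambda(z)-z$ is strictly decreasing (since $\lambda'<1$), the square root is increasing in $z$ on the relevant branch and strictly decreasing in $\sigma$, so $z^*$ rises with $\sigma$ and $P^*=\Phi(z^*)$ follows. Your route is more general (it handles $\delta>0$ verbatim) and makes the dependence on second-order optimality explicit; the paper's is more elementary because eliminating $y^*$ via the quadratic rather than via the FOC sidesteps the implicit-function/SOC computation entirely. Your closing observation that $\mu=\delta=0$ yields constant $P^*$ is correct and applies equally to the paper's argument, which tacitly needs $\mu>0$ for strictness.
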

\begin{remark}
How the return distribution scales with slot time is a contested question. In theory, volatility scales with the square root of time (assuming a Gaussian process). Practically, in short time intervals, it scales between the square root and linearly in time.
\end{remark}
Finally, the option value increases and the exercise probability decreases in the amount of atomic MEV on chain:
 \begin{proposition}
 \label{prop4:atomic_mev}
 The option value $V^*$ is increasing and the exercise probability $P^*$ is decreasing in $\mu$, the atomic MEV on-chain. 
 \end{proposition}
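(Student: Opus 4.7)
The plan is to derive both monotonicity statements from a single structural observation: $V^*(\mu)$ is convex and non-decreasing in $\mu$, with derivative equal to $1-P^*(\mu)$.

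Writing $g(y) := y - P_{DEX}(y/P_0)$ so that $\Pi_\tau(y) = \mu + g(y) + r_\tau y$, the objective takes the form $V^*(\mu) = \max_y E[\max\{0,\,\mu + g(y) + r_\tau y\}]$. For each fixed $y$ and realization of $r_\tau$, the integrand is convex and non-decreasing in $\mu$, as the composition of $(\cdot)^+$ with an affine function of $\mu$. Expectations preserve both properties, and the pointwise maximum over $y$ of convex non-decreasing functions is again convex and non-decreasing. This already yields the first claim that $V^*$ is non-decreasing in $\mu$.

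For the second claim, I would apply the envelope theorem to obtain $V^{*\prime}(\mu) = Pr[\Pi_\tau(y^*(\mu)) > 0] = 1 - P^*(\mu)$, where $y^*(\mu)$ denotes the optimal position and the second equality uses continuity of the distribution of $r_\tau$ so that $Pr[\Pi_\tau(y^*)=0]=0$. Since $V^*$ is convex in $\mu$ by the previous paragraph, its derivative $1-P^*$ must be non-decreasing in $\mu$, i.e.\ $P^*$ is non-increasing in $\mu$, which is the second claim.

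The only technical step I anticipate is justifying the envelope identity together with the interchange of derivative and expectation. The interchange is valid because the integrand is $1$-Lipschitz in $\mu$ uniformly in $(y,r_\tau)$ and $r_\tau$ is integrable, so dominated convergence applies; uniqueness of $y^*(\mu)$, which is needed for a clean envelope formula, follows from strict concavity of the objective in $y$ under standard strict-convexity assumptions on $P_{DEX}$ (already invoked implicitly for the first-order condition in \Cref{overprice}). Both verifications are routine and would be deferred to the appendix.
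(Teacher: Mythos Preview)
Your proposal is correct and follows essentially the same approach as the paper: both use the envelope theorem to obtain $V^{*\prime}(\mu)=1-P^*$ and then argue that $V^*$ is convex in $\mu$ because it is a pointwise maximum over functions convex in $\mu$, from which the monotonicity of $P^*$ follows. The only cosmetic difference is ordering---the paper first derives the envelope identity to show $V^*$ is increasing and then invokes convexity for $P^*$, whereas you establish convexity and monotonicity of $V^*$ up front---and you add a (welcome) remark on the regularity needed for the envelope step that the paper omits.
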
 



 The free option problem gets worse if we reduce 
 atomic MEV per unit of liquidity on-chain. As reducing on-chain MEV is a goal that many solutions and applications work towards (e.g. through better AMM design, middle-layer infrastructure, encrypted mempools), paradoxically, the free option problem gets exacerbated the better we make the on-chain trading experience.

\subsection{Prediction Markets}
It should be noted that, while DEX trading with an external price signal would be the most important current use case of the free option, the availability of a free option could interfere with other applications as well. Moreover, it could prevent some applications from gaining much traction on Ethereum, because the more users it attracts, the more vulnerable it would become to exploitation through the free option.
As an example, imagine a prediction market on Ethereum 
would attract liquidity comparable to the most popular prediction markets elsewhere. Suppose a highly traded “Which party wins the 2028 US Presidential Election?” contract is about to be influenced by a pivotal news expected sometime in the next few minutes (or hours). A searcher could buy all the liquidity of one side of the contract in a candidate block, then wait until just before the reveal deadline. If either the information is not released or the news are against the trade, exercise the option; otherwise, reveal the block. Because information arrival is stochastic, as long as there is enough liquidity with respect to the information arrival time window, this strategy can be repeated every slot until the news finally lands. With sufficient liquidity, the same pattern could arise around football matches, an eccentric billionaire's number of tweets, or any other event with an expected important announcement or signals.




\section{Methodology}
\label{sec:methodology}

In this section, we outline the methodology for empirically examining the free option on historical blocks, including collecting data, measuring block value and the position value of DEX trades, and constructing the option metrics.

\parhead{Data collection.}
We curate a dataset covering all the blocks that contain at least one relevant DEX trade between January 1, 2024, and March 8, 2025. 
 Using the approach introduced by \cite{wu2025measuringcexdex}, we identify as relevant, \SI{6930821} CEX-DEX searcher transactions from 23 major searchers in \SI{1877729} blocks, including information on searcher payments and token amounts. We then collect the total value and the builder data of these blocks from \cite{relayscan}. To estimate position values of DEX trades, we collect Binance historical quotes data from Tardis.dev \cite{tardisdata}, tracking mid prices for USDT pairs of all traded tokens. Prices are sampled for 8 seconds after the slot time and converted into ETH using contemporaneous ETH/USDT mid price, leveraging the superior liquidity of USDT pairs to ensure consistent and accurate pricing.

\parhead{Total block value.}
We define the total value of a block as the sum of all payments to the builder at the time of commitment, including all included transactions' priority fee and coinbase transfer.
Formally, for a transaction $i \in T_b$, where $T_b$ represents the set of transactions included in block $b$, we denote its priority fee (tip) and coinbase transfer as $i_\text{tip}$ and $i_\text{transfer}$, respectively. The total value of block $b$, denoted by $v_b$, is

\[
v_b \;=\; \sum_{i \in T_b} (i_\text{tip} + i_\text{transfer}).
\]

\parhead{DEX position value.}
The value of a DEX trade is the revenue that the searcher can make by flattening their DEX-acquired inventory on the CEX, net of any trading fees. 
As the searcher's actual attainable execution price on CEX is not observable, we estimate their revenue by using \emph{markouts} calculated with CEX mid prices.
Formally, consider a DEX transaction $j$ that purchases $x$ amount of token A and sells $y$ amount of token B on the DEX, net of liquidity provider fees. Let $P_A(t)$ and $P_B(t)$ be the respective CEX prices converted to ETH at markout horizon $t$. The ETH value of trade $j$ at time $t$ can be estimated as:
\[
\pi_j(t) = x \cdot P_A(t) - y \cdot P_B(t) - \text{base fees}_j - \text{CEX taker fees.}\footnotemark\
\]\footnotetext{We assume CEX-DEX searchers operate at the highest user tier on Binance, thereby benefiting from the lowest applicable fee rate of 0.01725\% \cite{cexfees}.}

\parhead{Option metrics.} 
Following \cite{liobanonatomic,whowinsandwhy,wu2025measuringcexdex}, we consider the builder and the CEX-DEX searchers as an integrated entity. 
Assuming the payments from CEX-DEX searchers to the builder at commitment reflect the contemporaneous values, 
the  value of block $b$ at time $t \in [0, \tau]$ after commitment becomes
\[
  \Pi_b(t)= v_b - 
  \sum_{j \in J_b} \bigl(j_{\text{tip}} + j_{\text{transfer}}\bigr) +
  \sum_{j \in J_b} \pi_j(t),
\]
where $J_b \subset T_b$ indexes the DEX trades included in block $b$ and $\tau \approx 8$s. 
The \emph{net option value} realized by the builder (hereafter, \emph{option value}) is
\[
\max\{0, \Pi_b(t)\}-\Pi_b(t)=\max\{0,-\Pi_b(t)\} .
\]

\parhead{Case Study: Slot \texttt{10990298}}
To illustrate how the block value and option value evolve within a slot, we conduct a deep dive into the block proposed at slot~\texttt{10990298} (block~\texttt{21776075}) by builder \texttt{Titan}.

In this slot, the total block value is $v_b = 0.0659$~ETH, of which 0.0311~ETH comes from relevant DEX trades. As shown in \Cref{fig:example_block}, at the slot start, the block value is $\Pi_b(t)=0.0581$~ETH, meaning that Titan is initially in profit after accounting for relevant DEX trade positions.\footnote{In particular, this trade by CEX-DEX searcher \texttt{Wintermute}: \url{https://etherscan.io/tx/0x18a092ccc78603a98e8992140fd5e1ff79cbfc910c2730359ce819ea9cc74012}.} As time progresses and the ETH/USDC price declines, the block value deteriorates and becomes unprofitable around 3.5 seconds into the slot. By the end of the free option window, the loss reaches 0.064~ETH due to the adverse DEX trade positions. 
With ePBS, \texttt{Titan} could have exercised the free option and thereby avoided 0.064~ETH loss.


\begin{figure}[t]
    \centering
    \includegraphics[width=0.55\linewidth]{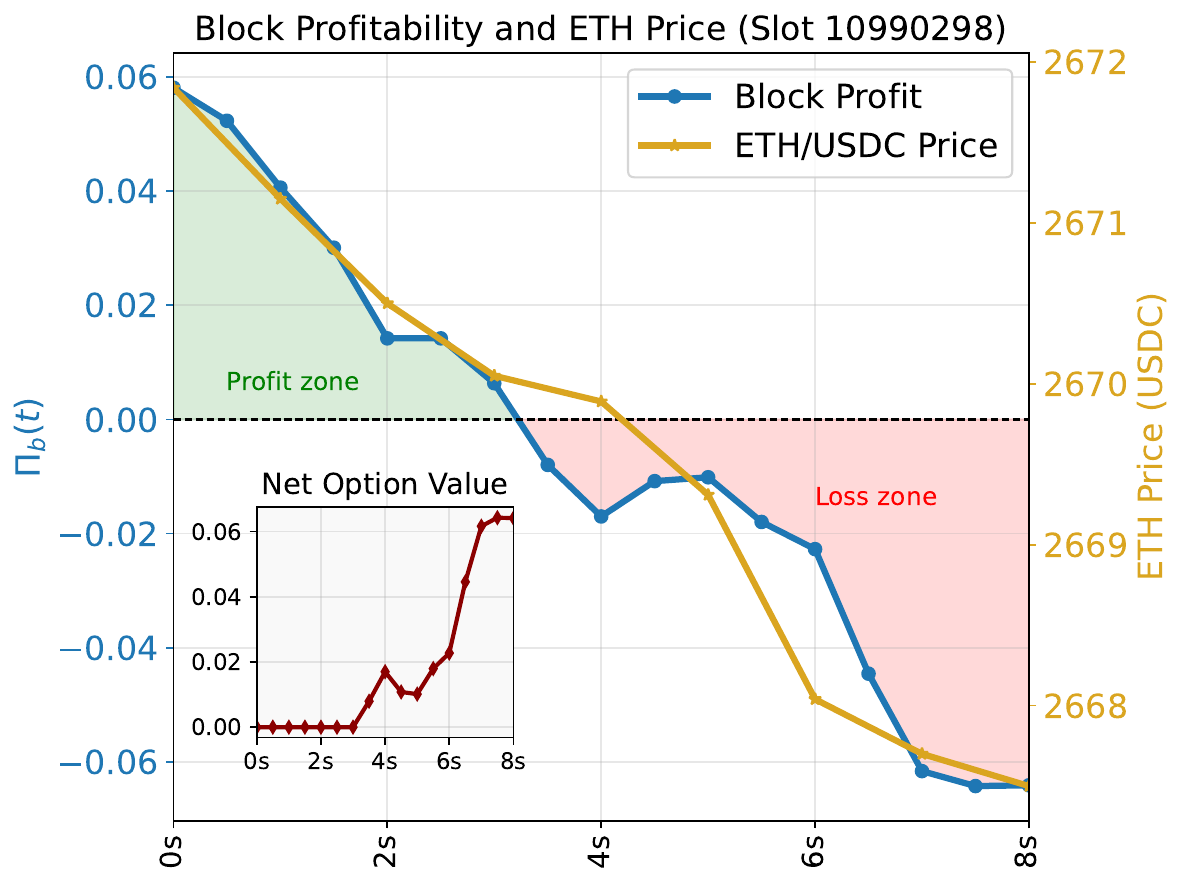}
    \caption{Evolution of block value $\Pi_b(t)$ and ETH/USDC price during slot~\texttt{10990298}. The inset depicts the option value.}
    \label{fig:example_block}
\end{figure}

\parhead{Limitations.}
Because we analyze historical blocks from a period without ePBS, integrated searcher-builders may have submitted different trades and built different blocks had the option been available.\footnote{For example, our theoretical results would predict that with an option they would make DEX-trades even if the DEX pairs are correctly priced and would generally trade larger positions such that the DEX price post trade overshoots the CEX price.} This counterfactual likely leads us to \textit{understate} option values and exercise probabilities. Additionally, our methodology has several limitations that may bias the measurement of option metrics:
\begin{enumerate}[topsep=1pt, itemsep=1pt, parsep=1pt, leftmargin=*]
\item We aggregate all searchers' trades in the block and proxy the transaction value at commitment with the searcher payment. While reasonable for integrated searcher-builders \cite{wu2025measuringcexdex}, this may understate true value. Our post-commitment estimates using Binance mid prices neglect liquidity costs and price impact. Both effects bias the option value downward.
\item Conversely, historical Binance prices may already reflect the impact of trades themselves. Large trades can move CEX prices substantially, meaning some extreme option values can be driven by self-imposed price impact. This makes our option value estimates potentially upward biased (cf. \Cref{appendix:high_volatility_days}).
\item Given available resources, we restrict analysis to trades by 23 major searchers involving tokens listed on Binance. Smaller searchers and tokens traded on other CEXes are excluded. Based on a Dune Analytics query adapted from \cite{cexdexdune}, this exclusion accounts for roughly 7\% of total volume. We do not account for other cross-domain signals (e.g., cross-chain arbitrages \cite{burakcrosschainarb}) that could also affect the option metrics.
\end{enumerate}

\section{Empirical Findings} \label{sec:empirical}
In this section, we present the empirical results of the option metrics examined on historical Ethereum blocks. Overall, our findings are broadly consistent with the theoretical predictions in \Cref{sec:model_theory}.

\subsection{Option Exercise Probability and Value}
We first examine the probability that the option is profitable to exercise. Overall, for an 8-second option window, the option is profitable in only 0.82\% of all observed blocks, but this average masks substantial variation across time. As shown in \Cref{fig:exercise_prob_hist}, exercise probabilities are low for most days, yet on several days with high price volatility, the probability exceeds 3.5\%, causing over a total of 4\% of blocks scheduled on those days to be missed, thus significantly degrading network liveness (cf. \Cref{fig:liveness}). This relationship is confirmed in \Cref{fig:exercise_prob_over_time}, where spikes in exercise probability align closely with periods of elevated ETH price volatility. Similar to \cite{liobanonatomic}, we calculate the ETH price volatility during a period by $\log_{10}(\frac{P_\text{high}}{P_\text{low}})$, where $P_\text{high}$ and $P_\text{low}$ are the highest and lowest prices during that period, respectively.


\begin{figure}[t]
\centering
\begin{subfigure}[t]{0.49\textwidth}
  \centering
  \includegraphics[width=\linewidth]{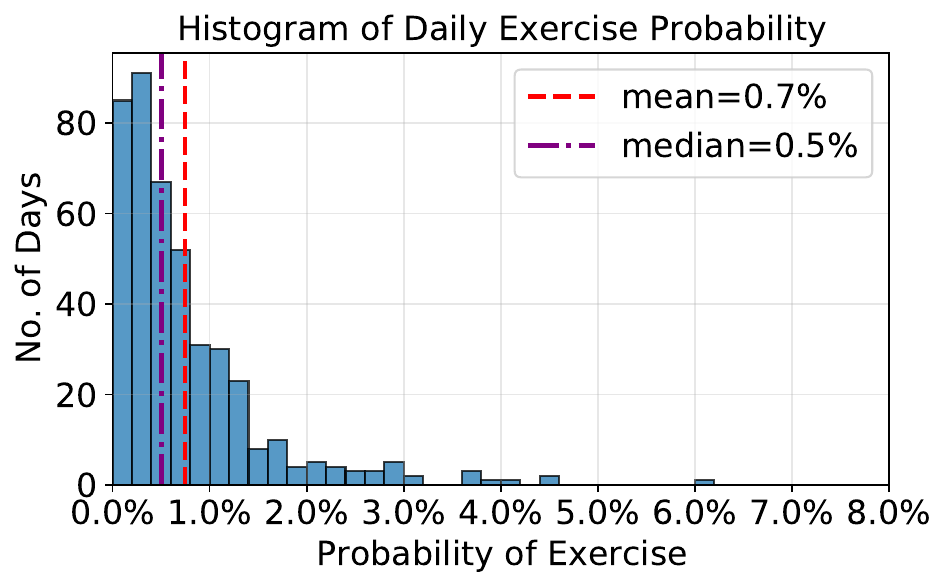}
  \caption{}
  \label{fig:exercise_prob_hist}
\end{subfigure}
\begin{subfigure}[t]{0.49\textwidth}
  \centering
  \includegraphics[width=\linewidth]{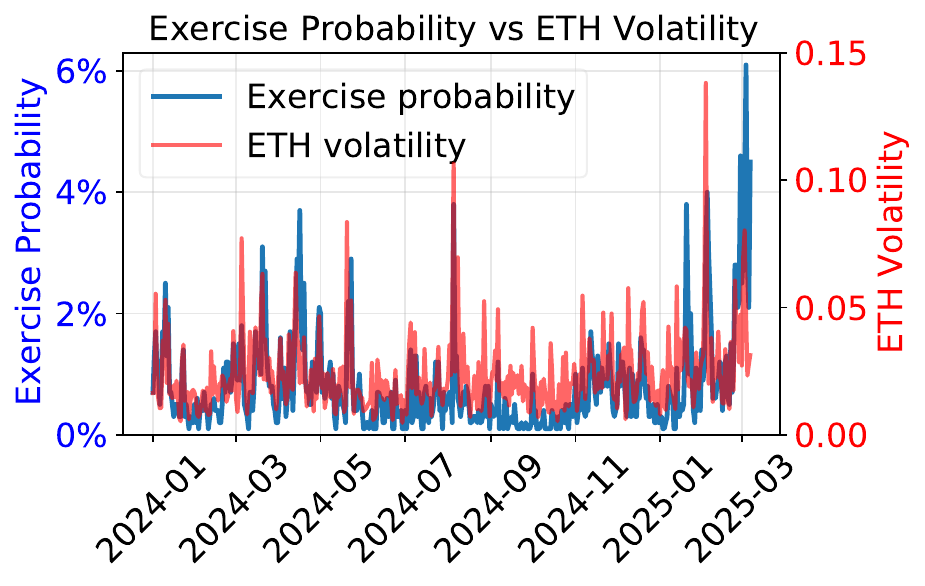}
  \caption{}
  \label{fig:exercise_prob_over_time}
\end{subfigure}
\begin{subfigure}[t]{0.49\textwidth}
  \centering
  \includegraphics[width=\linewidth]{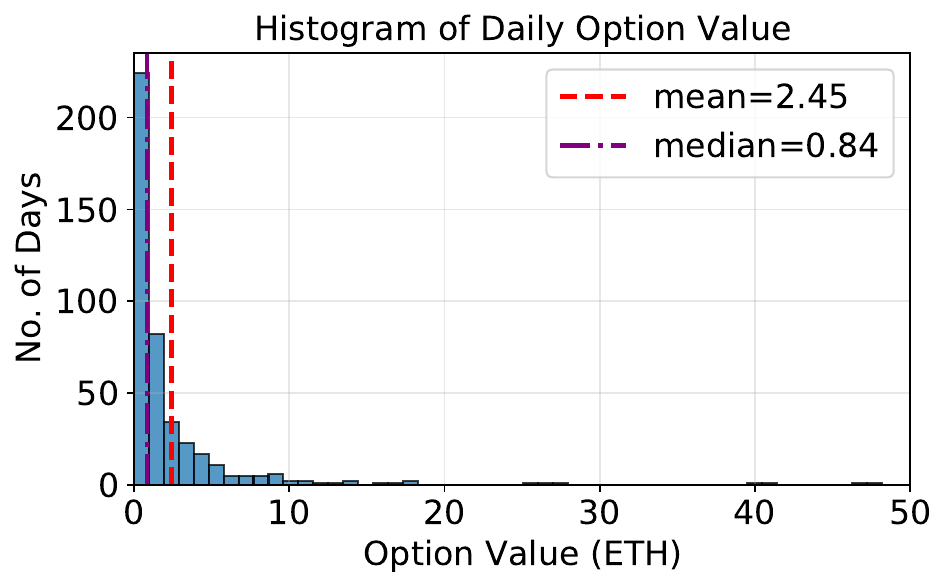}
  \caption{}
  \label{fig:option_value_hist}
\end{subfigure}
\begin{subfigure}[t]{0.49\textwidth}
  \centering
  \includegraphics[width=\linewidth]{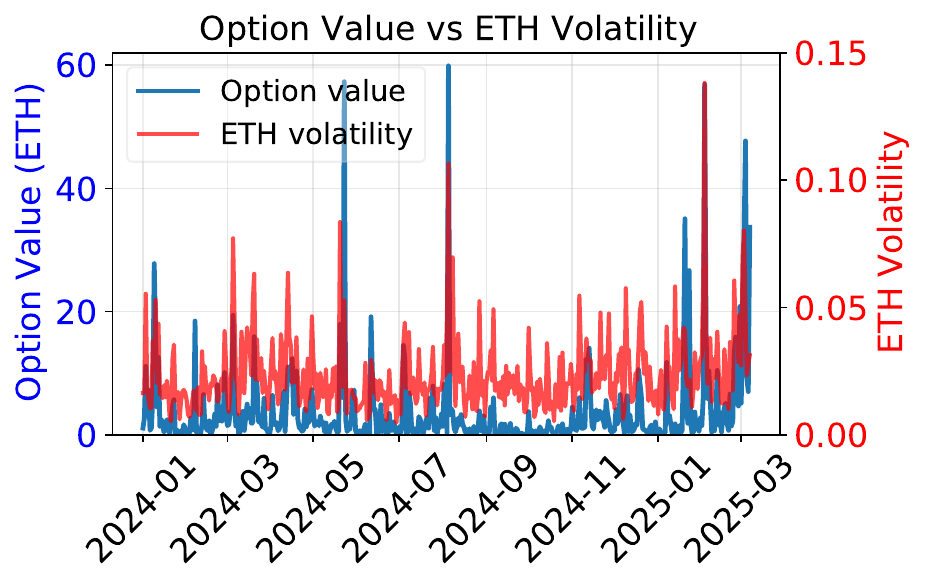}
  \caption{}
  \label{fig:option_value_over_time}
\end{subfigure}
\caption{(a) Histogram of daily option exercise probability. (b) Daily option exercise probability and ETH price volatility. (c) Histogram of daily aggregate option value. (d) Daily aggregate option value and ETH price volatility.}
\end{figure}

Turning to option value, i.e., the value captured from exercised options
, we again observe low averages but highly skewed distributions. The asymmetry is evident in \Cref{fig:option_value_hist}, where the median daily option value is 0.84 ETH and most days cluster near zero, but several exhibit extreme values. As is shown in \Cref{fig:option_value_over_time}, on high-volatility days, the option value is substantially higher, sometimes exceeding 40 ETH and accounting for the bulk of total option value.\footnote{The daily exercise probability and aggregate option value reflect only the blocks included in our sample. Consequently, some days contain more blocks than others.}


These results confirm \Cref{prop2:return,prop3:vol}: volatility amplifies the exercise probability and the option value. Moreover, we notice the option value comes from only a few short intervals on high-volatility days. Detailed case studies of such episodes are provided in \Cref{appendix:high_volatility_days}.

\subsection{Builder Heterogeneity}
We next examine cross-sectional heterogeneity across builders. Builders differ systematically in order flow composition: large builders access a diverse mix of order flow, while small builders often rely heavily on a single type of order flow from their exclusive providers \cite{whowinsandwhy}. This distinction is crucial, as theory predicts that the total block value's dependence on CEX-DEX flow increases the option value and exercise probability.

Specifically, for the two market leaders, \texttt{beaverbuild} and \texttt{Titan}, the option is profitable in just 0.75\% and 0.66\% of their blocks, respectively, consistent with their access to rich non-CEX–DEX order flow. By contrast, smaller builders whose block value is dominated by CEX–DEX flow exhibit much higher exercise probabilities: \texttt{blockbeelder}, \texttt{blocksmith}, and \texttt{gigabuilder} show rates of 9.39\%, 6.86\%, and 23.44\%, respectively. Note that because their market shares are small, their contribution to the overall incidence of exercised options is limited. In contrast, large builders, though less likely to exercise, account for most exercised options in aggregate simply by virtue of their market share.

The trajectory of \texttt{rsync} illustrates this dynamic further. Before September 2024, when they were a large builder with 23.43\% market share and CEX–DEX flow accounted for 29\% of their block value, their option exercise probability was 1.09\%. After retreating from the builder market and relying more heavily on CEX–DEX searcher activity \cite{wu2025measuringcexdex}, CEX-DEX flow accounted for 56\% of their block value, and their exercise probability rose to 3.22\%. We summarize the results for representative builders in~\Cref{tab:builders}.

\begin{table}[t]
\centering
\caption{Cross-sectional heterogeneity analyses across builders.}
\label{tab:builders}

\begin{tabular}{lccc}
\toprule
{Builder} & {Market Share} & {Mean CEX-DEX Flow} & {Option Exercise}  \\
& & {Value Share} & {Probability} \\
\midrule
Titan	& 29.04\% & 14.90\% &	0.66\% \\
beaverbuild	& 49.19\% & 19.12\% &	0.75\% \\
rsync (before 2024-09-01)	& 23.43\% & 29.01\% &	1.09\% \\
rsync (after 2024-09-01)	& 5.55\% & 56.25\% &	3.22\% \\
blockbeelder	& 0.09\% & 71.89\% & 9.39\% \\
blocksmith & $<0.01\%$ & 79.43\% & 6.86\% \\
gigabuilder & $<0.01\%$ & 84.57\% & 23.44\% \\
\bottomrule
\end{tabular}
\end{table}

\begin{figure}[t]
  \centering
\begin{subfigure}[t]{0.49\textwidth}
  \centering
  \includegraphics[width=\linewidth]{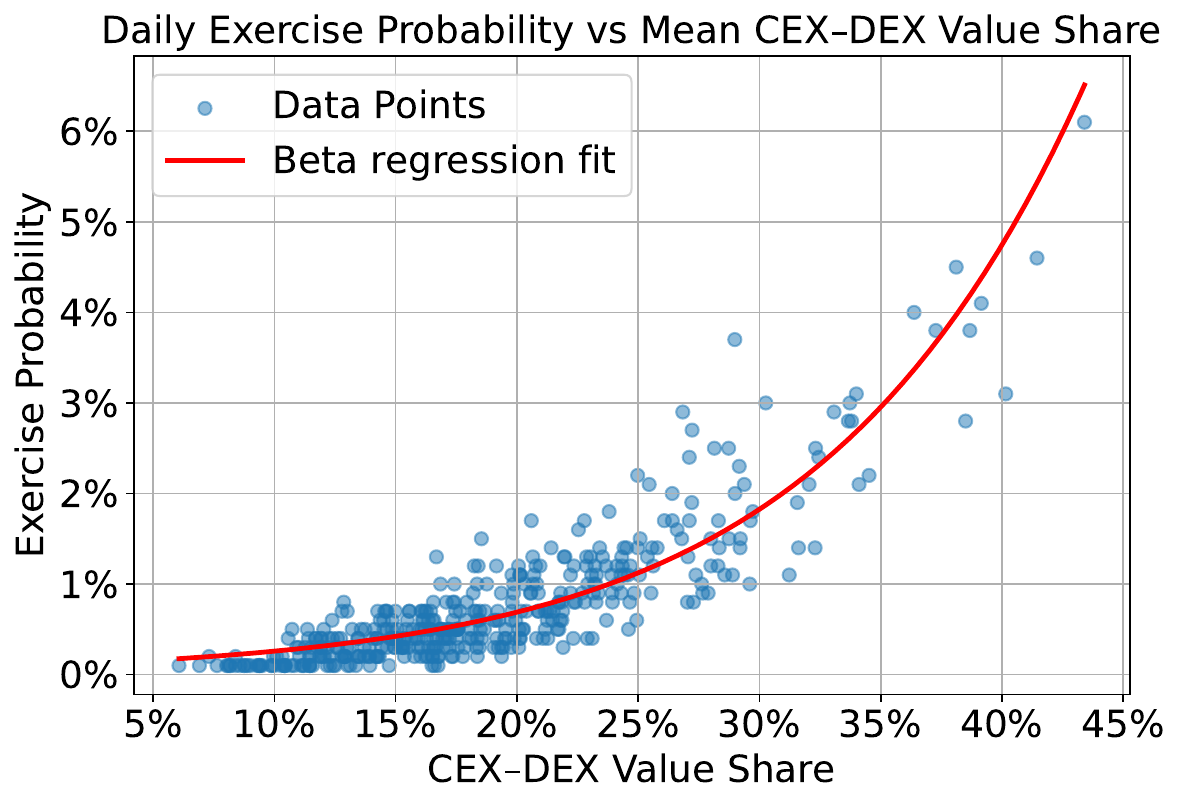}
  \caption{}
  \label{fig:beta}
\end{subfigure}
\begin{subfigure}[t]{0.49\textwidth}
  \centering
  \includegraphics[width=\linewidth]{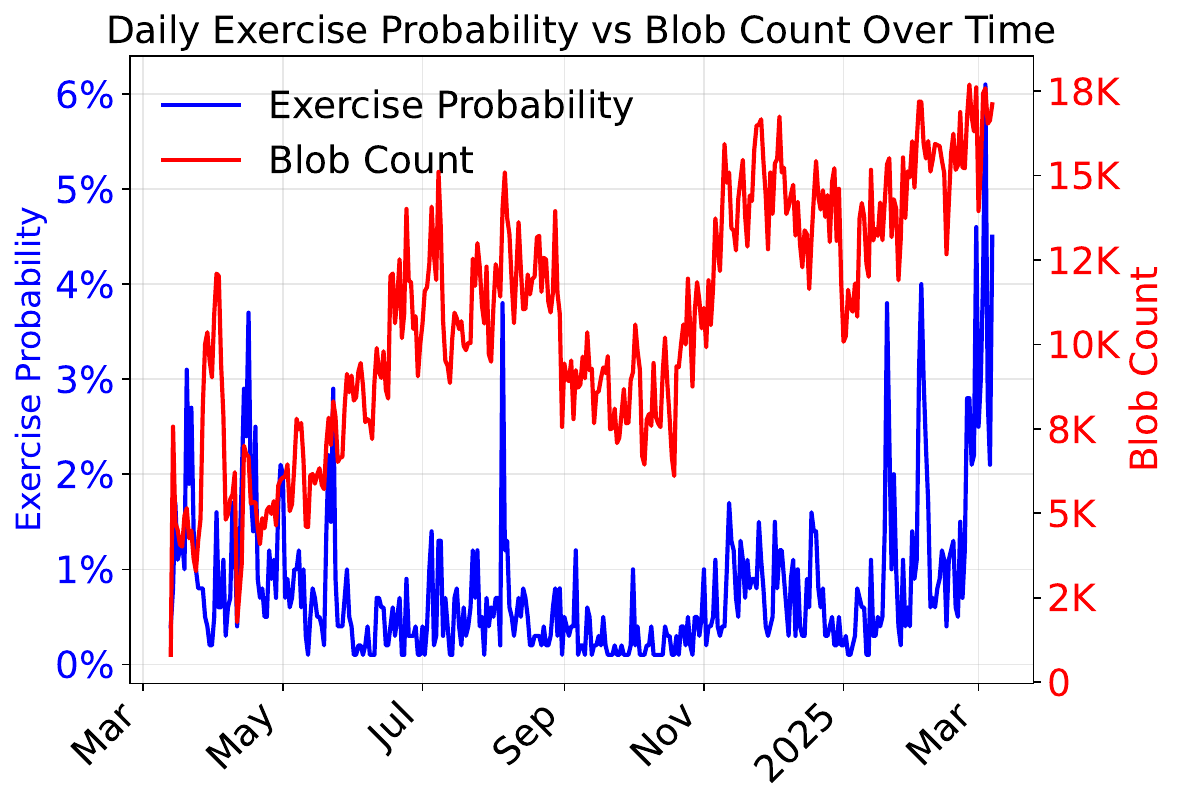}
  \caption{}
  \label{fig:blob_count}
\end{subfigure}
\caption{(a) Correlation between daily option exercise probability and average CEX-DEX flow value share. (b) Daily option exercise probability and blob count.}
\end{figure}

A correlation analysis corroborates these observations. \Cref{fig:beta} shows a strong positive relationship between the average CEX–DEX share of block value and the daily option exercise probability (Pearson’s $r = 0.846, p < 0.0001$). A beta regression with a large coefficient for CEX-DEX flow value share ($\beta = 9.85, p<0.001$) confirms that higher CEX–DEX flow dependence is strongly associated with greater option exercise. These results confirm \Cref{prop4:atomic_mev}.

\subsection{Further Considerations}
While blob fee revenue is currently minimal for builders (approximately 0.3\% of the block value) \cite{liobajasonblob,DAblobminimalfee}, in a future where blob inclusion could be monetized by builders, the potential loss of blob fee revenue at times of high blob demand could make exercising the option less frequent. 
However, as illustrated in \Cref{fig:blob_count}, data show no systematic relationship between blob demand and option exercise probability. As such, blobs currently do not provide a meaningful deterrent.

On the other hand, \emph{trailing MEV} can act as a disincentive against exercising the free option. When a slot is empty and no state transition occurs, part of MEV from pending transactions carries over to the next slot. This creates an additional incentive for building the following block, in order to capture the aggregate value from both slots. Under ePBS, a similar dynamic arises when the builder exercises the free option, rendering the slot empty and causing the transactions to return to the mempool.

Accounting for this effect when calculating option exercise probability and value reveals a sharp reduction in option use in consecutive slots.\footnote{For tractability, we assume the entire value of non-CEX-DEX order flow trails to the next slot.} For an 8-second option window, the share of blocks in which the option is profitable to exercise drop from 0.82\% to 0.3\% while the median option value falls from 0.84 ETH to 0.35 ETH. In other words, if the previous slot is empty, roughly 63\% of options that would otherwise be exercised remain unused. Trailing MEV therefore dampens the incentive to exercise the option in consecutive slots, reflecting an endogenous feature of block-building dynamics that reduces the liveness risks introduced by ePBS.

\section{Mitigations}

\label{sec:mitigations}


We analyze two explicit mitigation strategies for the free option problem under the current ePBS specification, namely shortening the option window and imposing penalties on builders who exercise the option.



\subsection{Shortening Option Window}
\begin{figure}[t]
\centering
\begin{subfigure}[t]{0.49\textwidth}
  \centering
  \includegraphics[width=\linewidth]{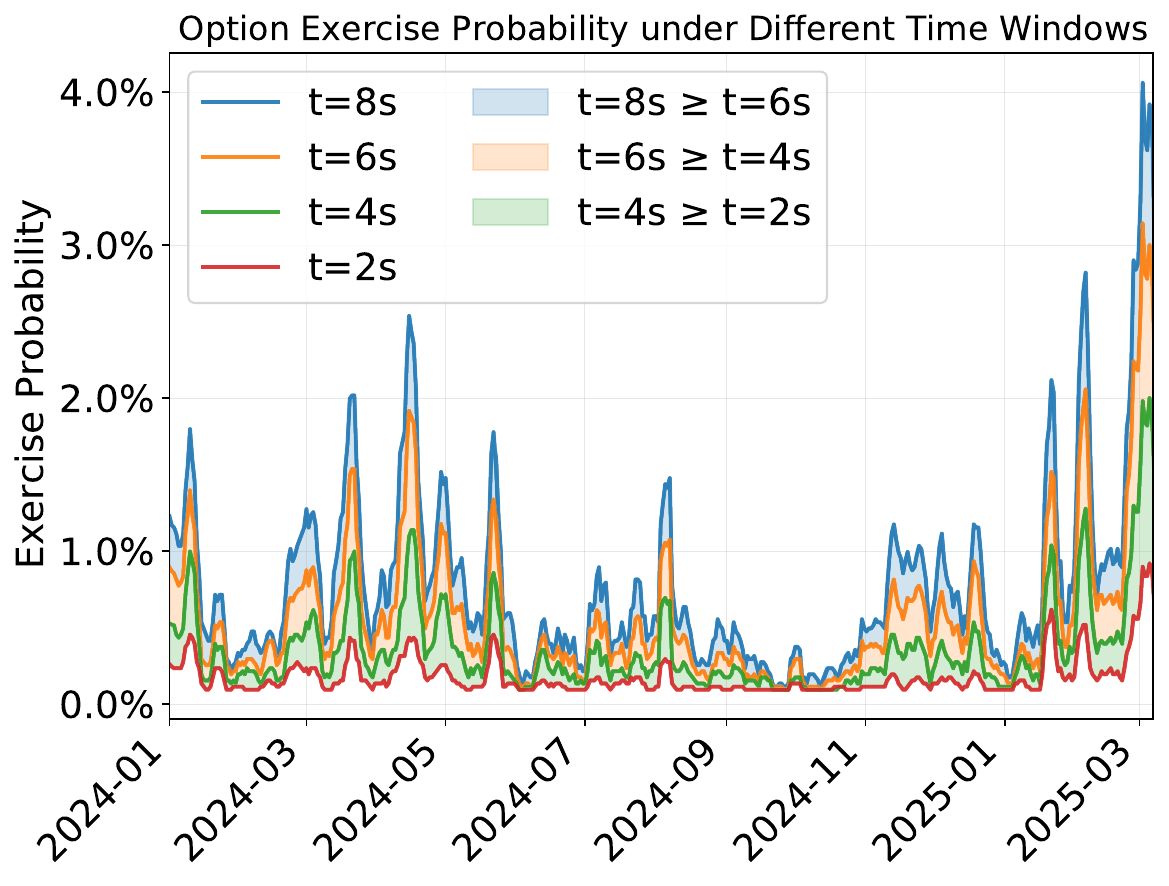} 
  \caption{}
  \label{fig:shorter_window}
\end{subfigure}\hfill
\begin{subfigure}[t]{0.49\textwidth}
  \centering
  \includegraphics[width=\linewidth]{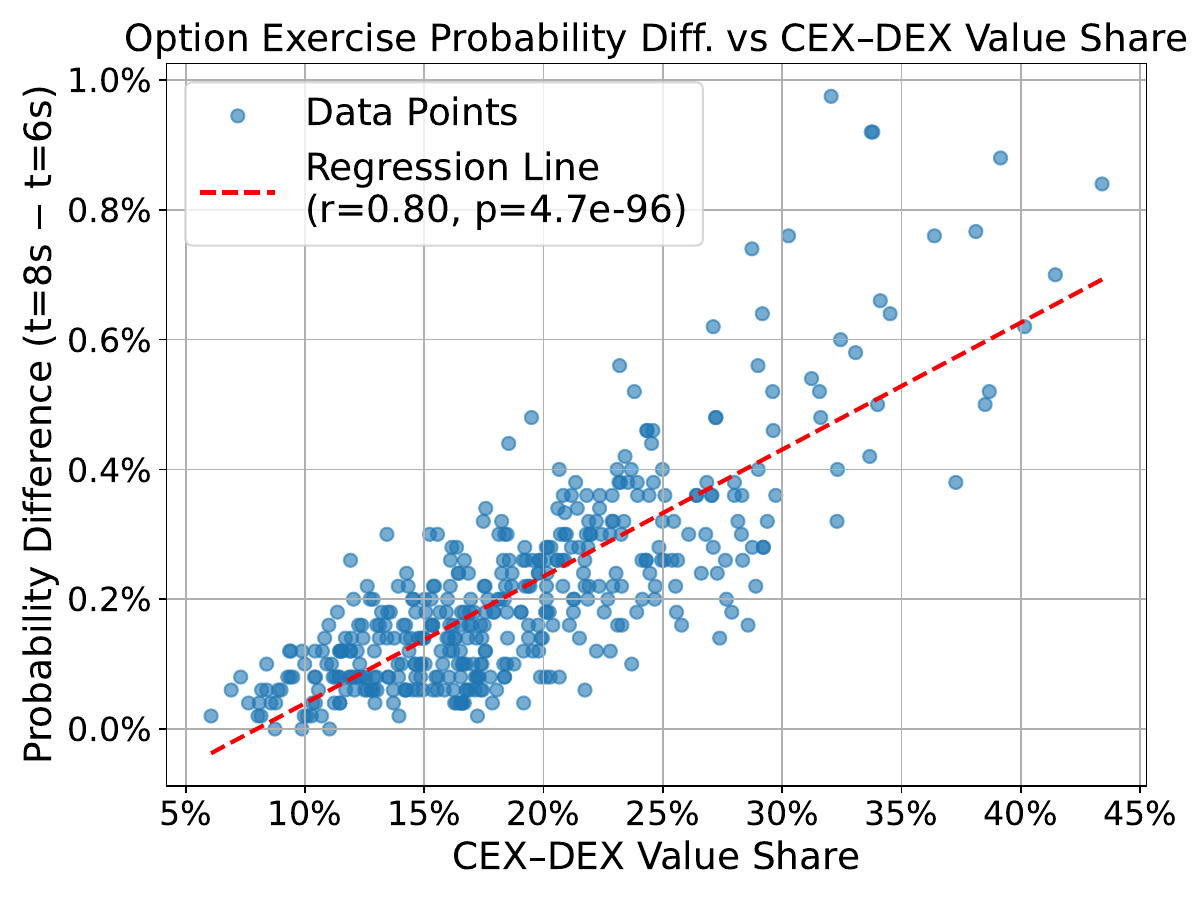}
  \caption{}
  \label{fig:shorter_window_correlation}
\end{subfigure}
\caption{(a) Option exercise probability under different free option windows (2s, 4s, 6s, 8s). (b) Correlation between the reduction in exercise probability when shortening from 8s to 6s and the CEX–DEX flow value share of block value.}
\end{figure}

As established in \Cref{prop2:return,prop3:vol}, the option value and exercise probability increase with the length of the free option window. A natural mitigation is therefore to shorten this window, i.e., moving the PTC deadlines forward closer to the slot start.

We evaluate exercise probabilities under different window lengths. Consistent with theory, shortening the window substantially reduces the exercise probability and thus reduces the liveness risk introduced by the option. In~\Cref{fig:shorter_window}, we observe that reducing the option window from 8s to 6s lowers the average exercise probability by more than 33\%; reducing it to 4s cuts the probability by over 50\%; and a 2s window reduces it by more than 77\%. Moreover, \Cref{fig:shorter_window_correlation} shows that the exercise probability difference between 8s and 6s is positively correlated with the CEX–DEX flow value share of block value. This implies that shortening the option window curtails exercise probability, most sharply in CEX–DEX flow dominated blocks where the option is most valuable.




We also observe that the effect of a shorter option window is very pronounced on the most volatile days in our sample. We detail the analyses in \Cref{appendix:high_volatility_days}. Therefore, shortening the option window is an effective mitigation, supported both theoretically and empirically. However, it comes with trade-offs. 
Tightening the PTC deadlines directly undermines one of the key benefits of ePBS: scalability. On L1, longer deadlines give validators more time to execute L1 blocks; while on L2, they allow more blobs to be propagated. Shortening the option window therefore mitigates liveness risk at the expense of execution capacity on L1 or data availability for L2s, weakening the scaling advantages that ePBS is meant to deliver.

\subsection{Penalties}


A complementary mitigation is to introduce explicit penalties for builders who exercise the option. In effect, this makes the option costly rather than free. By an argument analogous to that in~\Cref{prop4:atomic_mev}, we obtain the following result (see~\Cref{appendix:proofs} for a proof).
\begin{proposition}
 \label{prop5:penalty}
 The option value $V^*$ and the exercise probability $P^*$ decrease with the level of penalties.
 \end{proposition}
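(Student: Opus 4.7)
The plan is to reduce the claim to \Cref{prop4:atomic_mev} by reformulating the builder's problem with penalties as the original problem with a shifted atomic MEV parameter. Let $c \geq 0$ denote the penalty incurred upon exercising the option. The builder now chooses between the reveal payoff $\Pi_\tau(y)$ and the withhold payoff $-c$, so the value is
\[
V^*(c) = \max_y E\bigl[\max\{\Pi_\tau(y),\, -c\}\bigr]
       = -c + \max_y E\bigl[\max\{\Pi_\tau(y)+c,\, 0\}\bigr],
\]
and the exercise probability is $P^*(c) = \Pr[\Pi_\tau(y^*(c)) < -c] = \Pr[\Pi_\tau(y^*(c)) + c < 0]$, where $y^*(c)$ is the maximizer.

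Monotonicity of $V^*$ follows directly from a pointwise argument: for every realization of $r_\tau$ and every fixed $y$, the map $c \mapsto \max\{\Pi_\tau(y), -c\}$ is weakly decreasing, with derivative $-\mathbf{1}\{\Pi_\tau(y) \leq -c\}$. Taking expectations preserves monotonicity, and taking a maximum over $y$ of a family of decreasing functions is decreasing, since for $c_1 < c_2$ and any $y$, $\max_{y'} f_{y'}(c_1) \geq f_y(c_1) \geq f_y(c_2)$, so $V^*(c_1) \geq V^*(c_2)$. Strict monotonicity holds whenever $P^*(c) > 0$, which follows from the envelope identity $\frac{dV^*}{dc} = -P^*(c)$.

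For the exercise probability, I would use the reformulation above: the inner problem $\max_y E[\max\{\Pi_\tau(y)+c, 0\}]$ is exactly the original builder problem of \Cref{sec:model_theory} in which the atomic MEV $\mu$ is replaced by $\mu + c$, since $\Pi_\tau(y) + c = (\mu + c) + (1+r_\tau)y - P_{DEX}(y/P_0)$. The optimal position $y^*(c)$ therefore coincides with the optimal position in the shifted problem, and the event $\{\Pi_\tau(y^*(c)) + c < 0\}$ is precisely the exercise event in the shifted problem. By \Cref{prop4:atomic_mev}, this exercise probability is decreasing in the atomic MEV parameter, hence in $c$.

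The main obstacle is the second part: monotonicity of $P^*$ requires dealing with the endogenous response of $y^*(c)$ to the penalty, which is not obvious from first principles because a larger $c$ both tightens the exercise condition for any fixed $y$ and changes the optimal $y$. The clean reduction to \Cref{prop4:atomic_mev}, rather than a direct comparative-statics calculation, bypasses this difficulty by absorbing $c$ into the deterministic component of $\Pi_\tau$ so that the structural result on atomic MEV applies verbatim.
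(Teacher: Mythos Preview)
Your proof is correct and follows essentially the same route as the paper: both absorb the penalty into the deterministic part of the payoff, rewriting $V^*(c)=\max_y E[\max\{0,\Pi_\tau(y)+c\}]-c$, and then use the structure of \Cref{prop4:atomic_mev}. The only stylistic difference is that you invoke \Cref{prop4:atomic_mev} as a black box for the monotonicity of $P^*$, whereas the paper redoes the convexity argument inline (noting $p\mapsto\max\{0,\mu+p+ry-P_{DEX}(y/P_0)\}$ is convex, hence $V^*$ is convex, hence $P^*=-\partial V^*/\partial p$ is decreasing); the content is identical.
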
 
While, penalties can be effective, as we argue next, it is important to note that they also have obvious downsides: they raise entry barriers for new builders, as they increase the capital required to buffer potential penalties. This reduction in competition can potentially, in turn, lower proposer revenue if fewer builders can bid aggressively in the ePBS auction \cite{yang2025decentralization}. Moreover, penalties raise the expected cost of blob inclusion: unless blob priority fees are sufficiently high, they do not compensate for the probability of missing PTC deadlines and being penalized. In equilibrium, either blob fees rise to cover the expected liability, or builders optimally reduce blob inclusion to avoid that risk.

\subsubsection*{Static Penalties}
We first evaluate exercise probabilities under different static penalty levels, i.e., the costs to exercise the option. As illustrated by~\Cref{fig:penalty_prob}, a cost of 0.075 ETH reduces the average exercise probability by roughly 75\%; a cost of 0.15 ETH by 83\%; and a cost of 0.5 ETH by 87\%. As with option window shortening, \Cref{fig:penalty_corr} further shows that this effect grows with CEX–DEX flow: the change in exercise probability between the no-penalty and 0.075 ETH regimes is positively correlated with the CEX–DEX share of block value. Static penalties are also effective in reducing exercise probability on high-volatility days, but are less effective in reducing option value (cf. \Cref{appendix:high_volatility_days}). 

\begin{figure}[t]
\centering
\begin{subfigure}[t]{0.49\textwidth}
  \centering
  \includegraphics[width=\linewidth]{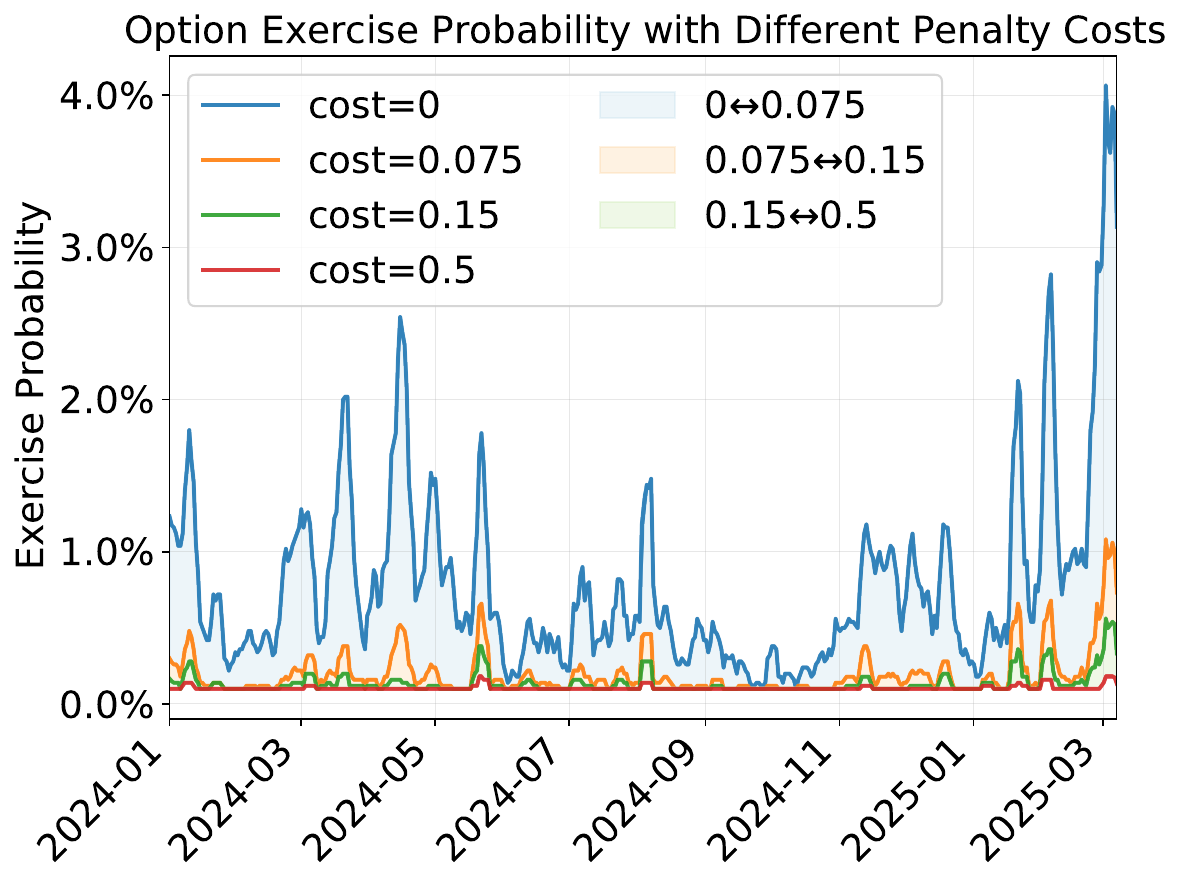} 
  \caption{}
  \label{fig:penalty_prob}
\end{subfigure}\hfill
\begin{subfigure}[t]{0.49\textwidth}
  \centering
  \includegraphics[width=\linewidth]{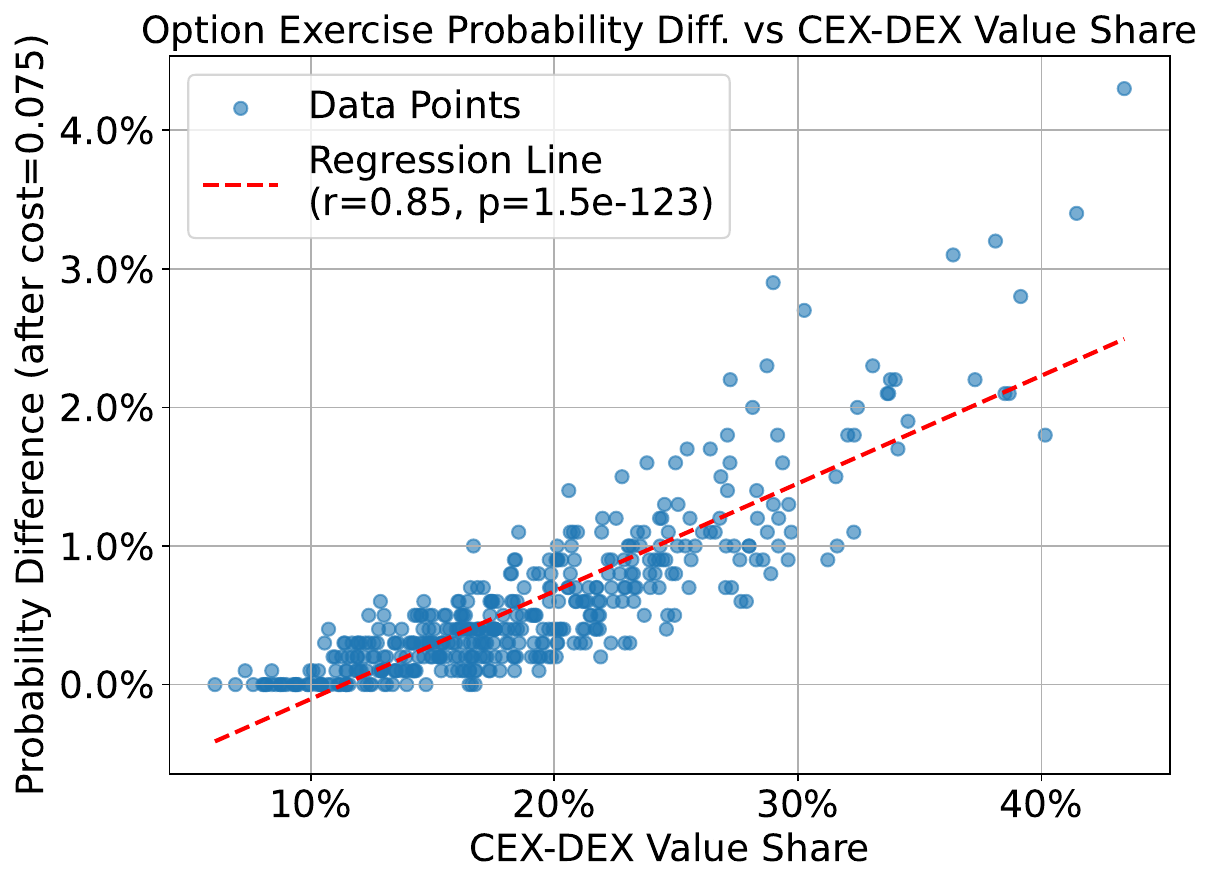}
  \caption{}
  \label{fig:penalty_corr}
\end{subfigure}
\caption{(a) Option Exercise Probabilities under different penalties (0, 0.075, 0.15, 0.5 ETH). (b) Correlation between the reduction in exercise probability when introducing a penalty of 0.075 ETH and the CEX-DEX flow value share of block value.}
\end{figure}

\subsubsection*{Dynamic Penalties}
Because market conditions vary quickly, penalty calibration should ideally adapt in real time. 
A simple idea is to index the penalty to the builder bid, on the premise that higher bids signal higher option value. In our data, however, the correlation between bids and the option exercise probability is only moderately strong (Pearson's $r = 0.516$, $p< 0.0001$). Bids are not a reliable proxy for factors such as price volatility or on-chain liquidity, which crucially determine option value. Moreover, only a handful of builders may strategically exercise the option, so their bids would not necessarily be competitive. So, the bids could fail to convey enough information for the protocol to calibrate penalties effectively. Finally, this mechanism is not resistant to off-chain agreements between the validator and the block builders. In fact, as shown in other contexts \cite{resnick2023contingent}, the revenue-maximizing mechanism is not to have penalties at all, and so there would be incentives to run an off-chain auction without using ePBS.
Another idea is to adjust penalties based on the \emph{historical exercise frequency}—the fraction of blocks missed under a given penalty. Under an i.i.d.\ assumption, with some penalty $p$, one could estimate $\pi(p)=\Pr(\text{exercise}\mid p)$ by the empirical exercise rate and choose $p$ to hit a target. In practice, markets are nonstationary, so such estimates degrade quickly. Moreover, the consensus protocol cannot observe the market conditions, such as liquidity, volatility, and order flow compositions. It can only observe the realized incidence of missed blocks. Therefore, a natural heuristic is to increase the penalty whenever the option is exercised and decrease it otherwise. This raises the question: with only such limited feedback, can we design a dynamic penalty mechanism that adapts to changing conditions and achieves near-optimal performance?

Formally, let $\alpha\in(0,1)$ denote the tolerated exercise rate, i.e., an upper bound on the fraction of rounds in which the option may be exercised, and let
\[
y_t = \mathbf{1}\{\text{option exercised at block } t \text{ under } p_t\}.
\]
We use a projected feedback update that raises the penalty after an exercise and relaxes it otherwise (cf.~\Cref{alg:pf-controller}).




\parhead{Model.} We model penalty selection as an online decision problem against a non-adaptive adversary. At each round $t=1,...,T$, an 
adversary specifies a binary response function $f_t:\mathcal X\times \mathbb R_{\geq0}\rightarrow\{0,1\}$ and a distribution $X_t$ from which $x_t\sim X_t$ is drawn. Interpret $f(x_t,p)=1$ as the option being exercised when the event is $x_t$ and $q_t(p):=E[f_t(x_t,p)]$ as the option exercise probability for the block constructed by the builder at time $t$ with penalty $p$. Before seeing the outcome $f_t$ and $x_t$, the 
protocol chooses $p_t\in\mathbb R_{\geq0}$ (based on history $\mathcal F_{t-1}$). Moreover, the 
protocol just has access to the outcome $f_t(x_t,p_t)$ but not the random variable $X_t$ nor the function $f_t$.\footnote{This is known in online convex optimization literature as one-bit feedback or bandit feedback model.}

The goal is to choose a sequence of penalties $\{p_t\}$ that minimizes the average penalty while keeping the expected exercise probability below a target $\alpha\in(0,1)$:
\begin{equation*}
    \min \frac{1}{T}\sum_{t=1}^T p_t
    \quad
    \text{s.t.}
    \quad
    E[f_t(x_t,p_t)]\leq \alpha, \text{ for }t=1,\dots,T.
\end{equation*}
We define $\{p_t^\star\}_{t=1}^T$ the best policy in hindsight as
\begin{equation*}
    p_t^\star \in\text{argmin}_{p\in\mathbb R_{\geq0}}\left\{p: E[f_t(x_t,p)]\leq \alpha\right\}. 
\end{equation*}
For an online policy with outcome penalties $\{p_t\}$, we define the \emph{dynamic cost regret} and the \emph{dynamic constraint violation regret}
\[
    R_T = E\left[\sum_{t=1}^T [p_t - p^\star_t]\right],
    \quad
    C_T = \sum_{t=1}^T \Big[E[f_t(x_t,p_t)]-\alpha\Big]_+. 
\]
In particular, 
if $R_T=o(T)$ and $C_T=o(T)$, it implies that with sufficiently big $T$, the dynamic penalty is close to the optimal policy.

A weaker constraint-violation notion is the long-run constraint-violation regret
\begin{equation*}
    LC_T = E\left[\left[\sum_{t=1}^T (f_t(x_t,p_t) -\alpha) \right]_+\right ]
\end{equation*}

We quantify how fast the per-round optimal penalties move via the \emph{path length} (also called the \emph{variation budget}) of the comparator sequence:
\[
  P_T^\star \;:=\; 1+\sum_{t=2}^T \bigl|p_t^\star - p_{t-1}^\star\bigr|.
\]
When $P_T^\star = o(T)$, the environment is nonstationary but stable on average: it may drift and even jump occasionally, but the cumulative magnitude of those movements grows sublinearly.


\parhead{Assumptions.} We impose the following conditions, standard in online convex optimization:
\begin{enumerate}[label=A\arabic*., leftmargin=*, topsep=2pt, itemsep=1pt]
  \item The adversary is non-anticipating. That is, for each $t$, $(f_t, X_t)$ may be chosen adaptively from the history $\mathcal F_{t-1}$ (including past actions and outcomes), but not from the current choice $p_t$ or the current draw $x_t$. 
  \item After round $t$ is played, the decision maker learns the outcome $f_t(x_t,p_t)$ but not $f_t$ nor $X_t$.
  \item There exists $p_\text{max}$ such that $f_t(\cdot,p_\text{max})\equiv0$.
  \item $E[f_t(x,p)]$ is non-increasing in $p$,
  and is $\mu$-strongly decreasing\footnote{If $g(p)=E[f(x,p)]$ is differentiable, is equivalent to $g'(p)\leq-\mu$.} in $[0,p_\text{max}]$, i.e., $E[(f_t(x,p)-f_t(x,p'))(p-p')]\leq -\mu (p-p')^2$ for all $p,p'\in[0,p_\text{max}]$. Also, we assume that $E[f_t]$ is $L-$Lipschitz, i.e., $|E[f_t(x,p)-f_t(x,p')]|\leq L|p-p'|$.
\end{enumerate}

\paragraph{Algorithm.}
The dynamic penalty mechanism is implemented via projected online gradient descent. Define $\phi_t(p)=\int_0^p (\alpha-q_t(z))\,dz$. Observing $f_t(x_t,p_t)$ provides a noisy gradient oracle $\partial_p\phi_t(p_t)=\alpha-q_t(p_t)$.  The explicit update rule is shown in Algorithm~\ref{alg:pf-controller}. In the following, we will utilize techniques from online convex optimization, as described in \cite{hazan2016introduction}.\footnote{Other applications of online convex optimization in DeFi can be found in \cite{angeris2024multidimensional,chitra2025curationary}.}

\begin{algorithm}[H]
\caption{Dynamic penalty, Online gradient descent (OGD)}
\label{alg:pf-controller}
\begin{algorithmic}[1]
\Require target level $\alpha\in(0,1)$, step-size $\eta_t\geq0$
\State Initialize $p_1 \gets 0$
\For{$t=1,2,\dots,T$}
  \State Play $p_t\in \mathbb R_{\geq0}$ and observe $y_t:=f_t(x_t,p_t)\in\{0,1\}$
  \State Set $g_t \gets \alpha-y_t$ and $\eta_t\leftarrow \frac{1}{\sqrt{t}}$
  \State Update $p_{t+1} \gets \Pi_{\mathbb R_{\geq0}}(p_t-\eta_t\,g_t)$
\EndFor
\end{algorithmic}
\end{algorithm}


\begin{proposition}
\label{prop:dynamic_penalty}
The Dynamic Penalty algorithm achieves dynamic cost regret and dynamic constraint violation regret of order $O\!\left(T^\frac{3}{4}\sqrt{P^\star_T+\log(T)}\right)$ and long-run constraint-violation regret $O(\sqrt{T})$ under assumptions (A1)--(A4).
\end{proposition}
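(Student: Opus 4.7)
The plan is to view Algorithm~\ref{alg:pf-controller} as stochastic projected online gradient descent on the one-dimensional potentials $\phi_t(p):=\int_0^p\bigl(\alpha-q_t(z)\bigr)\,dz$. Because $q_t$ is $\mu$-strongly decreasing and $L$-Lipschitz (A4), $\phi_t$ is $\mu$-strongly convex and $L$-smooth; and since $\phi_t'(p)=\alpha-q_t(p)$ changes sign exactly at $p_t^\star$ (which lies in $[0,p_{\max}]$ by A3), $p_t^\star\in\arg\min_{p\geq 0}\phi_t(p)$. Under A1--A2, $p_t$ is $\mathcal F_{t-1}$-measurable while $(f_t,x_t)$ is conditionally independent of $p_t$ given $\mathcal F_{t-1}$, so $g_t=\alpha-y_t$ is an unbiased, uniformly bounded ($|g_t|\leq 1$) gradient oracle for $\phi_t$ at $p_t$. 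A quick induction also shows the iterates stay a.s.\ in $[0,\bar p]$ with $\bar p:=p_{\max}+1$: if $p_t\leq p_{\max}$ then $p_{t+1}\leq p_t+\eta_t(1-\alpha)\leq p_{\max}+1$; if $p_t>p_{\max}$, A3 forces $y_t=0$ and so $p_{t+1}=[p_t-\eta_t\alpha]_+<p_t$.

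The main workhorse is a dynamic regret bound for stochastic projected OGD with moving comparator. Starting from the non-expansiveness of the projection together with convexity,
\[
E\!\left[\phi_t(p_t)-\phi_t(p_t^\star)\right]\leq\frac{E\!\left[(p_t-p_t^\star)^2\right]-E\!\left[(p_{t+1}-p_t^\star)^2\right]}{2\eta_t}+\frac{\eta_t}{2}.
\]
Replacing $(p_{t+1}-p_t^\star)^2$ by $(p_{t+1}-p_{t+1}^\star)^2$ incurs an error of at most $2\bar p\,|p_{t+1}^\star-p_t^\star|$. Summing and applying Abel summation with $\eta_t=1/\sqrt{t}$, the telescoping piece is $O(\bar p^2\sqrt{T})$, the comparator-drift remainder contributes $\bar p\sqrt{T}\,P_T^\star$, and $\sum_t\eta_t=O(\sqrt{T})$; together these give
\[
\sum_{t=1}^T E\!\left[\phi_t(p_t)-\phi_t(p_t^\star)\right]=O\!\left(\sqrt{T}\bigl(P_T^\star+\log T\bigr)\right).
\]
I expect the bookkeeping here—keeping track of the comparator drift $|p_{t+1}^\star-p_t^\star|$ and the time-varying step size simultaneously—to be the main technical nuisance.

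Strong convexity gives $(p_t-p_t^\star)^2\leq\tfrac{2}{\mu}\bigl(\phi_t(p_t)-\phi_t(p_t^\star)\bigr)$, so $\sum_t E[(p_t-p_t^\star)^2]=O(\sqrt{T}(P_T^\star+\log T))$, and Cauchy--Schwarz upgrades this to
\[
\sum_{t=1}^T E\bigl|p_t-p_t^\star\bigr|\leq\sqrt{T\cdot\textstyle\sum_t E[(p_t-p_t^\star)^2]}=O\!\left(T^{3/4}\sqrt{P_T^\star+\log T}\right).
\]
The cost regret satisfies $|R_T|\leq\sum_t E|p_t-p_t^\star|$ directly. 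For the constraint-violation regret, $L$-Lipschitzness of $q_t$ and $q_t(p_t^\star)=\alpha$ (for interior $p_t^\star$; the boundary cases $p_t^\star\in\{0,p_{\max}\}$ are handled trivially because then $[q_t(p_t)-\alpha]_+=0$ whenever $p_t\geq p_t^\star$) give $C_T\leq L\sum_t E|p_t-p_t^\star|$. Both bounds then reach the claimed $O(T^{3/4}\sqrt{P_T^\star+\log T})$ rate.

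For the long-run constraint-violation regret, the key observation is that the one-sided projection inequality $p_{t+1}\geq p_t-\eta_t g_t=p_t+\eta_t(y_t-\alpha)$ rearranges to $y_t-\alpha\leq(p_{t+1}-p_t)/\eta_t$. Summing and applying Abel summation,
\[
\sum_{t=1}^T(y_t-\alpha)\leq\frac{p_{T+1}}{\eta_T}-\frac{p_1}{\eta_1}+\sum_{t=2}^T p_t\!\left(\frac{1}{\eta_{t-1}}-\frac{1}{\eta_t}\right)\leq p_{T+1}\sqrt{T}\leq\bar p\sqrt{T},
\]
because $p_1=0$, the residual sum is non-positive ($p_t\geq 0$ and $1/\eta_{t-1}-1/\eta_t=\sqrt{t-1}-\sqrt{t}<0$), and $p_{T+1}\leq\bar p$. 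This is an \emph{almost sure} bound, so $LC_T=E\bigl[[\sum_t(f_t(x_t,p_t)-\alpha)]_+\bigr]\leq\bar p\sqrt{T}=O(\sqrt{T})$, with no martingale concentration required.
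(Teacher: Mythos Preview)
Your proof is correct and shares the paper's overall skeleton---projected OGD with comparator drift, Cauchy--Schwarz, strong monotonicity/Lipschitzness, and the Abel-summation trick for $LC_T$---but the intermediate bookkeeping is organized differently. The paper's key lemma bounds the $\eta_t$-\emph{weighted} sum $\sum_t\eta_t\,E[(\alpha-q_t(p_t))(p_t-p_t^\star)]$ (this is where the $\log T$ appears, via $\sum_t\eta_t^2$), then converts it through $(q_t(p_t)-\alpha)_+^2\le L(\alpha-q_t(p_t))(p_t-p_t^\star)$ and a weighted Cauchy--Schwarz with weights $\eta_t,1/\eta_t$; the cost regret is then recovered from $|p_t-p_t^\star|\le\tfrac1\mu|q_t(p_t)-\alpha|$. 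You instead bound the unweighted function-value regret $\sum_t E[\phi_t(p_t)-\phi_t(p_t^\star)]$, invoke strong convexity of $\phi_t$ to pass to $\sum_t E[(p_t-p_t^\star)^2]$, and apply an unweighted Cauchy--Schwarz. Both routes land at the same rate; yours is arguably a bit cleaner because $R_T$ and $C_T$ are handled uniformly through $\sum_t E|p_t-p_t^\star|$, at the cost of a slightly looser intermediate constant (your telescoping term contributes $O(\sqrt{T})$ rather than $O(\log T)$, which is immaterial since $P_T^\star\ge1$). The $LC_T$ argument is identical to the paper's.
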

The above result is concerned with the expectation of regret (see~\Cref{appendix:dynamic} for a proof). 
To control realized violations, define $m_t := f_t(x_t,p_t)-\mathbb{E}[f_t(x_t,p_t)\mid \mathcal F_{t-1},p_t]$. Then $\{m_T\}$ is a martingale with $E[m_t]=0$ and $|m_t|\le1$. Therefore, by the Azuma-Hoeffding inequality, with probability at least $1-\delta$,
\[
\left|\sum_{t=1}^T f_t(x_t,p_t)-\sum_{t=1}^T \mathbb{E}\!\left[f_t(x_t,p_t)\mid \mathcal F_{t-1},p_t\right]\right|
\;\le\; \sqrt{2T\log\!\frac{2}{\delta}}.
\]
Combining with the bound on $C_T$ gives
\[
\sum_{t=1}^T\big[f_t(x_t,p_t)-\alpha\big]_+
\;\le\; O\!\left(T^\frac{3}{4}\sqrt{P_T^\star+\log T}\right)
\;+\; \sqrt{2T\log\!\frac{2}{\delta}},
\]
with probability at least $1-\delta$.


\parhead{Empirical Performance.} As illustrated in \Cref{fig:dynamic}, with target $\alpha=0.1\%$ and a step size of ${1}/{\sqrt{7200}}$, the dynamic mechanism achieves an average exercise probability of $0.096\%$, well below the no-penalty baseline ($0.82\%$) and comparable to the level under a high static penalty of $0.5$~ETH ($0.107\%$). Importantly, it attains this outcome with an average penalty of only $0.104$~ETH. Beyond average conditions, the dynamic penalty mechanism also adapts effectively on high-volatility days. 
The option exercise probability falls from 2.9\% to 0.5\% on May 23, 2024, 
from 3.8\% to 0.6\% on August 5, 2024, 
from 3.1\% to 0.6\% on February 3, 2025, 
and from 6.1\% to 0.2\% on March 4, 2025.

\begin{figure}[t]
    \centering
    \includegraphics[width=1\linewidth]{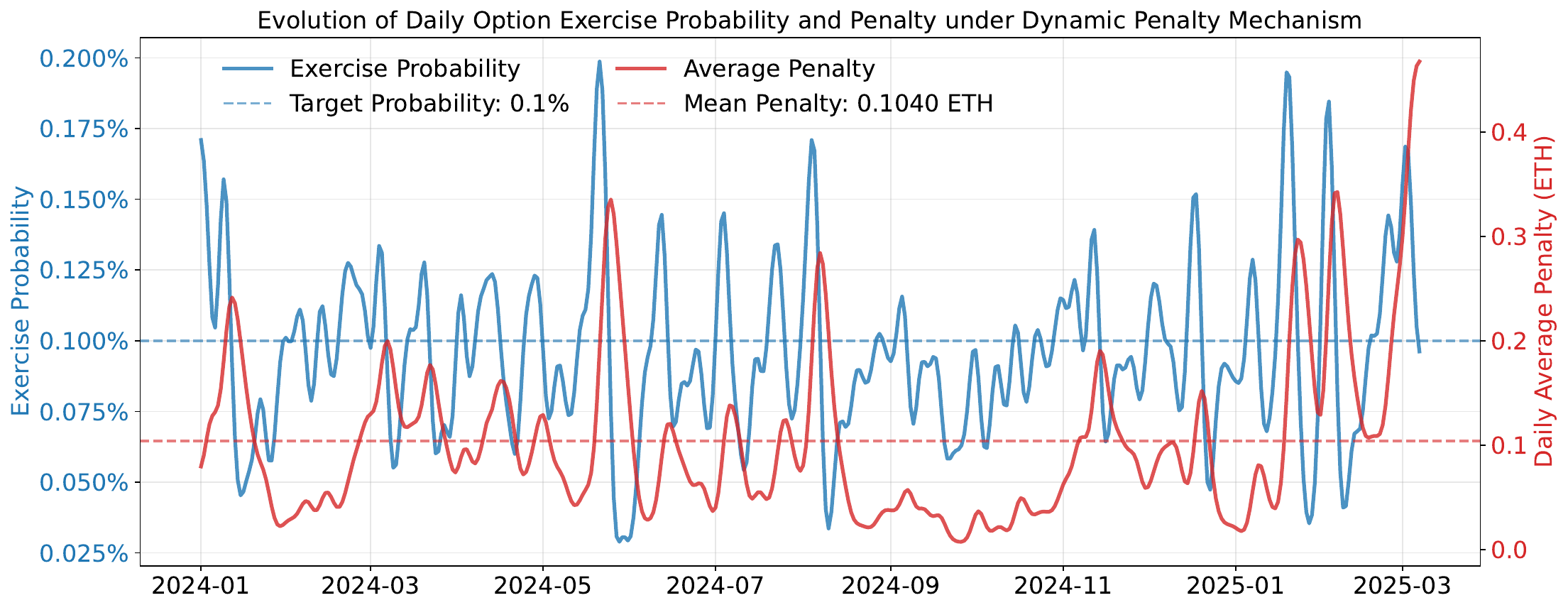}
    \caption{Daily exercise probability and average penalty under the dynamic penalty mechanism with step size $\eta_t=\tfrac{1}{\sqrt{7200}}$ and target exercise probability $\alpha=0.1\%$.}
    \label{fig:dynamic}
\end{figure}

\section{Conclusion}

We have documented and analyzed the free option problem introduced by ePBS with dual PTC deadlines. While the design removes reliance on trusted relays, it also creates a distinct liveness risk: builders can condition payload or blob release on late-arriving external information and invalidate payloads when unfavorable. Our theoretical model shows that the option value and exercise probability increase with liquidity, volatility, and CEX-DEX MEV value share in the block. Empirically, in historical blocks, we find that although exercise probabilities are low on average, they spike in periods of high volatility—the very moments when users most rely on timely execution. Empty slots thus appear precisely when predictability is most critical, degrading user experience.

Our theoretical analysis further shows that builders exploiting the free option would deteriorate on-chain market functioning and efficiency by prolonging periods of stale prices. LPs are particularly adversely affected. 
\emph{Active} LPs forgo revenue they would have earned when prices move in their favor but DEX pools cannot adjust promptly \cite{lvr}, while \emph{passive} LPs lose trading-fee revenue. Moreover, on-chain price signals become distorted: builders exercising the option tend to ``overshoot'', inducing temporary mispricing. While such overshooting may partially compensate LPs ex-post with trading fees, it undermines the reliability of DEXes as price oracles. 
\Cref{tab:stakeholder_impact} summarizes the impact of the free option across market participants.

We also discuss mitigation strategies that require protocol-level interventions. Shortening the option window directly lowers the option exercise probability but reduces scalability. Penalties make the option costly rather than free, 
but raise entry barriers for builders. Both approaches entail trade-offs between scalability, entry barriers for builders, and protocol complexity. While reputation effects may discourage some builders from exercising the option, they cannot be relied upon in equilibrium. Protocol design should therefore provide robust incentives that hold across various market conditions.

Looking forward, it is important to recognize that the free option problem arises not only in trading assets but potentially in any setting where valuable information can arrive between the commitment and PTC deadlines—for example, in prediction markets or other real-time, time-sensitive applications. 

\begin{table}[t]
\centering
\caption{Impact of the ePBS Free Option Across Market Participants}
\label{tab:stakeholder_impact}
\footnotesize
\setlength{\tabcolsep}{3pt}
\renewcommand{\arraystretch}{1.1}
\resizebox{\textwidth}{!}{%
\begin{tabular}{
  l
  @{\hspace{0.25cm}}
  >{\RaggedRight\arraybackslash}p{0.22\linewidth}
  @{\hspace{0.22cm}}
  c
  @{\hspace{0.25cm}}
  >{\RaggedRight\arraybackslash}p{0.62\linewidth}
}

\toprule
\textbf{Participant} & \textbf{Objective} & \textbf{Impact} & \textbf{Economic Mechanism} \\
\midrule
Users
& Timely execution at fair prices
& \downarrowr
& Prolonged inclusion delays lead to stale prices, increasing slippage and adverse selection risks. \\
\arrayrulecolor{gray}
\midrule
\makecell[l]{Atomic MEV\\Searchers}
& Extract atomic MEV
& \downarrowr
& Slower price discovery reduces arbitrage frequency.\\
\midrule
Active LPs
& Capture price movements and fees
& \downarrowr
& Loss of revenue when prices move in their favor but pools cannot adjust promptly. \\
\midrule
Passive LPs
& Earn trading fees
& \downarrowr
& Reduced trading activity during stale-price intervals lowers fee revenue. \\
\midrule
Builders
& Maximize block profit
& \uparrowg
& Free option enables conditional payload release, increasing expected surplus by avoiding adverse outcomes, but repeated exercise may erode reputation and proposer trust. \\

\midrule
Proposers
& Maximize block-outsourcing revenue
& \ambarrow
& Option-induced builder bids raise short-run revenue, and under trustless payments validators are paid regardless of option exercise; long-run returns may fall as market efficiency deteriorates.\\
\arrayrulecolor{black}
\bottomrule
\end{tabular}
}
\footnotesize
\textit{Legend:} \textcolor{green!60!black}{$\uparrow$}-positive impact,
\textcolor{red!70!black}{$\downarrow$}-negative impact,
\textcolor{orange!80!black}{$\updownarrow$}-ambiguous.
\end{table}

\section*{Acknowledgments}
The authors appreciate Quintus Kilbourn, DataAlways, Potuz, Terence Tsao and Thomas Thiery for helpful discussions and comments. We also thank Danning Sui for providing data. Fei Wu acknowledges support from a Flashbots research grant.

\bibliography{bibliography} 

@inproceedings{oztime2023,
author = {\"{O}z, Burak and Kraner, Benjamin and Vallarano, Nicol\`{o} and Kruger, Bingle Stegmann and Matthes, Florian and Tessone, Claudio Juan},
title = {Time Moves Faster When There is Nothing You Anticipate: The Role of Time in MEV Rewards},
year = {2023},
isbn = {9798400702617},
publisher = {Association for Computing Machinery},
address = {New York, NY, USA},
url = {https://doi.org/10.1145/3605768.3623563},
doi = {10.1145/3605768.3623563},
abstract = {We present a novel analysis of a competitive dynamic present on Ethereum known as "waiting games", where validators can use their distinct monopoly position in their assigned slots to delay block proposals in order to optimize returns through Maximal Extractable Value (MEV) payments, a type of incentive outside the Proof-of-Stake incentive scheme. However, this strategy risks block exclusion due to missed slots or potential orphaning. Our analysis reveals evidence that, although there are substantial incentives to undertaking the risks, validators are not capitalizing on waiting games, leaving potential profits unrealized. Moreover, we present an agent-based model to test the eventual consensus disruption caused by waiting games under different settings, arguing that such disruption only occurs with significant delay strategies. Ultimately, this research provides in-depth insights into Ethereum's waiting games, illuminating the trade-offs and potential profit opportunities for validators in this evolving blockchain landscape.},
booktitle = {Proceedings of the 2023 Workshop on Decentralized Finance and Security},
pages = {1–8},
numpages = {8},
keywords = {blockchain, consensus, incentives, maximal extractable value},
location = {Copenhagen, Denmark},
series = {DeFi '23}
}

@misc{schlegel2025arbitrageboundedliquidity,
      title={Arbitrage with bounded Liquidity}, 
      author={Christoph Schlegel},
      year={2025},
      eprint={2507.02027},
      archivePrefix={arXiv},
      primaryClass={q-fin.MF},
      url={https://arxiv.org/abs/2507.02027}, 
}

@inproceedings{schwarz2023time,
  title={Time Is Money: Strategic Timing Games in Proof-Of-Stake Protocols},
  author={Schwarz-Schilling, Caspar and Saleh, Fahad and Thiery, Thomas and Pan, Jennifer and Shah, Nihar and Monnot, Barnab{\'e}},
  booktitle={5th Conference on Advances in Financial Technologies},
  year={2023}
}

@inproceedings{mamageishvili2024searcher,
  title={Searcher Competition in Block Building},
  author={Mamageishvili, Akaki and Schlegel, Christoph and Sudakov, Benny},
  booktitle={6th Conference on Advances in Financial Technologies},
  year={2024}
}

@inproceedings{daian2020flash,
  title={Flash boys 2.0: Frontrunning in decentralized exchanges, miner extractable value, and consensus instability},
  author={Daian, Philip and Goldfeder, Steven and Kell, Tyler and Li, Yunqi and Zhao, Xueyuan and Bentov, Iddo and Breidenbach, Lorenz and Juels, Ari},
  booktitle={2020 IEEE symposium on security and privacy (SP)},
  pages={910--927},
  year={2020},
  organization={IEEE}
}

@inproceedings{bahrani2024centralization,
  title={Centralization in Block-Building and Proposer-Builder Separation},
  author={Bahrani, Maryam and Garimidi, Pranav and Roughgarden, Tim},
  booktitle={International Conference on Financial Cryptography and Data Security},
  pages={331--349},
  year={2024},
  organization={Springer}
}

@inproceedings{qin2021liqui,
  title = {An empirical study of defi liquidations: Incentives, risks, and instabilities},
  author = {Qin, Kaihua and Zhou, Liyi and Gamito, Pablo and Jovanovic, Philipp and Gervais, Arthur},
  booktitle = {Proceedings of the 21st ACM Internet Measurement Conference},
  pages = {336--350},
  year = {2021}
}

@misc{burakcrosschainarb,
      title={Cross-Chain Arbitrage: The Next Frontier of MEV in Decentralized Finance}, 
      author={Burak Öz and Christof Ferreira Torres and Christoph Schlegel and Bruno Mazorra and Jonas Gebele and Filip Rezabek and Florian Matthes},
      year={2025},
      eprint={2501.17335},
      archivePrefix={arXiv},
      primaryClass={cs.CR},
      url={https://arxiv.org/abs/2501.17335}, 
}

@inproceedings{liobanonatomic,
author = { Heimbach, Lioba and Pahari, Vabuk and Schertenleib, Eric },
booktitle = { 2024 IEEE Symposium on Security and Privacy (SP) },
title = {{ Non-Atomic Arbitrage in Decentralized Finance }},
year = {2024},
volume = {},
ISSN = {},
pages = {3866-3884},
doi = {10.1109/SP54263.2024.00256},
url = {https://doi.ieeecomputersociety.org/10.1109/SP54263.2024.00256},
publisher = {IEEE Computer Society},
address = {Los Alamitos, CA, USA},
month =May}

@inproceedings{qin2022quantifying,
  title={Quantifying blockchain extractable value: How dark is the forest?},
  author={Qin, Kaihua and Zhou, Liyi and Gervais, Arthur},
  booktitle={2022 IEEE Symposium on Security and Privacy (SP)},
  pages={198--214},
  year={2022},
  organization={IEEE}
}

@inproceedings{wu2025measuringcexdex,
  author =	{Wu, Fei and Sui, Danning and Thiery, Thomas and Pai, Mallesh},
  title =	{{Measuring CEX-DEX Extracted Value and Searcher Profitability: The Darkest of the MEV Dark Forest}},
  booktitle =	{7th Conference on Advances in Financial Technologies (AFT 2025)},
  pages =	{26:1--26:23},
  series =	{Leibniz International Proceedings in Informatics (LIPIcs)},
  ISBN =	{978-3-95977-400-0},
  ISSN =	{1868-8969},
  year =	{2025},
  volume =	{354},
  editor =	{Avarikioti, Zeta and Christin, Nicolas},
  publisher =	{Schloss Dagstuhl -- Leibniz-Zentrum f{\"u}r Informatik},
  address =	{Dagstuhl, Germany},
  URL =		{https://drops.dagstuhl.de/entities/document/10.4230/LIPIcs.AFT.2025.26},
  URN =		{urn:nbn:de:0030-drops-247450},
  doi =		{10.4230/LIPIcs.AFT.2025.26},
  annote =	{Keywords: Decentralized Finance, Maximal Extractable Value, CEX-DEX arbitrages}
}

@inproceedings{whowinsandwhy,
  title={Who Wins Ethereum Block Building Auctions and Why?},
  author={{\"O}z, Burak and Sui, Danning and Thiery, Thomas and Matthes, Florian},
  booktitle={6th Conference on Advances in Financial Technologies},
  pages={1},
  year={2024}
}

@misc{obadia2021unity,
      title={Unity is Strength: A Formalization of Cross-Domain Maximal Extractable Value}, 
      author={Alexandre Obadia and Alejo Salles and Lakshman Sankar and Tarun Chitra and Vaibhav Chellani and Philip Daian},
      year={2021},
      eprint={2112.01472},
      archivePrefix={arXiv},
      primaryClass={cs.CR}
}

@inproceedings{heimbach2023ethereum,
  title = {Ethereum's Proposer-Builder Separation: Promises and Realities},
  author = {Heimbach, Lioba and Kiffer, Lucianna and Ferreira Torres, Christof and Wattenhofer, Roger},
  booktitle = {2023 ACM Internet Measurement Conference (IMC), Montreal, QC, Canada},
  month = oct,
  year = {2023},
}

@inproceedings{gupta2023centralizing,
  title={The Centralizing Effects of Private Order Flow on Proposer-Builder Separation},
  author={Gupta, Tivas and Pai, Mallesh M and Resnick, Max},
  booktitle={5th Conference on Advances in Financial Technologies},
  year={2023}
}

@INPROCEEDINGS {yang2025decentralization,
author = { Yang, Sen and Nayak, Kartik and Zhang, Fan },
booktitle = { 2025 IEEE Symposium on Security and Privacy (SP) },
title = {{ Decentralization of Ethereum's Builder Market }},
year = {2025},
volume = {},
ISSN = {},
pages = {1512-1530},
abstract = { Blockchains protect an ecosystem worth more than $500bn with strong security properties derived from the principle of decentralization. Is today's blockchain decentralized? In this paper, we empirically studied one of the least decentralized parts of Ethereum, its builder market. The builder market was introduced to fairly distribute Maximal Extractable Value (MEV) among validators and avoid validator centralization. As of the time of writing, two builders produced more than 85% of blocks in Ethereum, creating a concerning centralization factor. However, a common belief is that such centralization “is okay,” arguing that builder centralization will not lead to validator centralization. In this empirical study, we quantify the significant proposer losses within the centralized builder market and challenge the belief that this is acceptable. The significant proposer losses, if left uncontrolled, could undermine the goal of PBS. Moreover, MEV mitigation solutions slated for adoption are affected too because they rely on the builder market as an “MEV oracle,” which is made inaccurate by centralization. Our investigation reveals the incentive issue within the current MEV supply chain and its implications for builder centralization and proposer losses. Finally, we analyze why the proposed mitigation cannot work and highlight two properties essential for effective solutions. },
keywords = {Privacy;Prevention and mitigation;Supply chains;Ecosystems;Writing;Blockchains;Security},
doi = {10.1109/SP61157.2025.00157},
url = {https://doi.ieeecomputersociety.org/10.1109/SP61157.2025.00157},
publisher = {IEEE Computer Society},
address = {Los Alamitos, CA, USA},
month =May}

@online{ElowssonDualPTC,
  author  = {Elowsson, Anders},
  title   = {Dual-deadline PTC vote in ePBS},
  url     = {https://notes.ethereum.org/%40anderselowsson/Dual-deadlinePTCvote},
  note    = {HackMD note. Idea originally suggested by Mark (@EthDreamer) on Discord.},
  urldate = {2025-08-18}
}

@misc{ptc,
    title = {Payload-timeliness committee (PTC) – an ePBS design},
    howpublished = {\url{https://ethresear.ch/t/payload-timeliness-committee-ptc-an-epbs-design/16054}},
    author={Francesco D'Amato and Michael Neuder},
    note = {accessed: 2025-08-23},
    year={2023}
}

@online{eip7732,
  author  = {Francesco D'Amato and Barnabé Monnot and Michael Neuder and Potuz and Terence Tsao},
  title   = {EIP-7732: Enshrined Proposer-Builder Separation},
  url     = {https://eips.ethereum.org/EIPS/eip-7732},
  note = {accessed: 2025-08-23},
}

@misc{mevboost,
    title = {MEV-Boost},
    howpublished = {\url{https://docs.flashbots.net/flashbots-mev-boost/}},
    author={Flashbots},
    note = {accessed: 2025-08-20},
}

@misc{bloxroutebug,
    title = {BloXroute - Feb 6th Post Mortem},
    howpublished = {\url{https://research.lido.fi/t/bloxroute-feb-6th-post-mortem/6586}},
    author={BloXroute},
    note = {accessed: 2025-08-20},
}

@misc{freeoption,
    title = {Free option problem},
    howpublished = {\url{https://efdn.notion.site/Free-option-problem-837cdbf06d4c4870b79a840dd5c45a38}},
    author={Julian Ma and Barnabé Monnot and Francesco D'Amato and Michael Neuder and Potuz and Terence Tsao},
    note = {Accessed: 2025-08-20},
    year={2024}
}

@misc{cexfees,
    title = {Spot Trading Fee Rate},
    URL = {https://www.binance.com/fee/trading},
    author = {Binance}
}

@misc{tardisdata,
    title = {Binance Spot historical data},
    URL = {https://docs.tardis.dev/historical-data-details/binance},
    author = {Tardis.dev}
}

@misc{relayscan,
    title = {MEV-Boost Analytics - Relay Scan},
    howpublished = {\url{https://www.relayscan.io/builder-profit?t=7d}},
    author={Flashbots},
    note = {accessed: 2025-08-21},
}

@misc{cexdexdune,
    title = {Dune Dashboard: CEX-DEX Arbitrage},
    howpublished = {\url{https://dune.com/rig_ef/cex-dex-dash}},
    author={Fei Wu and Danning Sui},
    note = {accessed: 2025-08-21},
}

@misc{migalabs,
    title = {Ethereum blobs public dashboard},
    howpublished = {\url{https://migalabs.io/blobs}},
    author={MigaLabs},
    note = {accessed: 2025-08-21},
}

@misc{mevboostpics,
    title = {mevboost.pics},
    howpublished = {\url{https://github.com/Nerolation/mevboost.pics}},
    author={Wahrst{\"a}tter, Anton},
    note = {Accessed: 2025-08-20},
    year = {2023}
}

@article{resnick2023contingent,
  title={Contingent fees in order flow auctions},
  author={Resnick, Max},
  journal={arXiv preprint arXiv:2304.04981},
  year={2023}
}

@inproceedings{angeris2024multidimensional,
  author =	{Angeris, Guillermo and Diamandis, Theo and Moallemi, Ciamac},
  title =	{{Multidimensional Blockchain Fees Are (Essentially) Optimal}},
  booktitle =	{7th Conference on Advances in Financial Technologies (AFT 2025)},
  pages =	{24:1--24:23},
  series =	{Leibniz International Proceedings in Informatics (LIPIcs)},
  ISBN =	{978-3-95977-400-0},
  ISSN =	{1868-8969},
  year =	{2025},
  volume =	{354},
  editor =	{Avarikioti, Zeta and Christin, Nicolas},
  publisher =	{Schloss Dagstuhl -- Leibniz-Zentrum f{\"u}r Informatik},
  address =	{Dagstuhl, Germany},
  URL =		{https://drops.dagstuhl.de/entities/document/10.4230/LIPIcs.AFT.2025.24},
  URN =		{urn:nbn:de:0030-drops-247433},
  doi =		{10.4230/LIPIcs.AFT.2025.24},
  annote =	{Keywords: Blockchains, transaction fees, online optimization, convex optimization}
}

@article{chitra2025curationary,
  title={A Curationary Tale: Logarithmic Regret in DeFi Lending via Dynamic Pricing},
  author={Chitra, Tarun},
  journal={arXiv preprint arXiv:2503.18237},
  year={2025}
}

@article{hazan2016introduction,
  title={Introduction to online convex optimization},
  author={Hazan, Elad and others},
  journal={Foundations and Trends{\textregistered} in Optimization},
  volume={2},
  number={3-4},
  pages={157--325},
  year={2016},
  publisher={Now Publishers, Inc.}
}

@article{wagner2024documentation,
  title={A documentation of Ethereum’s PeerDAS},
  author={Wagner, Benedikt and Zapico, Arantxa},
  journal={Cryptology ePrint Archive},
  year={2024}
}

@inproceedings{liobajasonblob,
  title={The Early Days of the Ethereum Blob Fee Market and Lessons Learnt},
  author={Heimbach, Lioba and Milionis, Jason},
  booktitle={Financial Cryptography and Data Security (FC)},
  year={2025}
}

@misc{DAblobminimalfee,
    title = {Understanding Minimum Blob Base Fees},
    howpublished = {\url{https://ethresear.ch/t/understanding-minimum-blob-base-fees/20489}},
    author={DataAlways},
    note = {Accessed: 2025-09-18},
    year = {2024}
}

@misc{4844,
    title = {EIP-4844: Shard Blob Transactions},
    howpublished = {\url{https://eips.ethereum.org/EIPS/eip-4844.}},
    author={Vitalik Buterin and Dankrad Feist and Diederik Loerakker, George Kadianakis and Matt Garnett and Mofi Taiwo and Ansgar Dietrichs},
    note = {Accessed: 2025-12-17},
    year = {2022}
}

@misc{lvr,
      title={Automated Market Making and Loss-Versus-Rebalancing}, 
      author={Jason Milionis and Ciamac C. Moallemi and Tim Roughgarden and Anthony Lee Zhang},
      year={2024},
      eprint={2208.06046},
      archivePrefix={arXiv},
      primaryClass={q-fin.MF},
      url={https://arxiv.org/abs/2208.06046}, 
}
\appendix
\crefalias{section}{appendix}
\crefalias{subsection}{appendix}
\section{Proofs}
\label{appendix:proofs}
\subsection*{Proof of Proposition~\ref{prop1:liquidity}}
\begin{proof}
We assume (w.l.o.g.) that the DEX follows a constant function market maker defined by $f(X,Y)=L$. We assume that the CFMM is convex.

By the envelope theorem, we have: 
\begin{align*}\frac{\partial}{\partial L}V^*&=\frac{\partial}{\partial L}E[\max\{0,\mu+(1+r_{\tau})y^*-P_{DEX}(y^*/P_0)\}]\\&=\frac{\partial}{\partial L}\int_{\frac{\Pi_0(y^*)}{y^*}}^{\infty}(\mu+xy^*-P_{DEX}(y^*/P_0))f_{r_\tau}(x)dx\\&=-\tfrac{\partial P_{DEX}(y^*/P_0)}{\partial L}(1-P^*)
\end{align*}

where $y^*$ is the optimal position. We have 
$$P_{DEX}(y^*/P_0)=X^{old}-X^{new}=X(L,Y^{old})-X(L,Y^{old}+y^*/P_0),$$ where $X(L,Y)$ denotes the numéraire reserves of the AMM as a function of $L$ and of the reserves of the risky asset, and $X^{old}$ (resp. $Y^{old}$) denotes the reserves of the numéraire asset (of the risky asset) prior to the trade and  $X^{new}$ (resp. $Y^{old}$) the reserves after the trade.
\begin{align*}
\tfrac{\partial P_{DEX}}{\partial L}=\tfrac{\partial}{\partial L}X(L,Y^{old})-\tfrac{\partial}{\partial L}X(L,Y^{old}+y^*/P_0)>0,\end{align*}
where  the last inequality follows by convexity of the AMM.

For the probability, we have
$$\frac{\partial}{\partial L}P^*=\frac{\partial P^*}{\partial P_{DEX}}\frac{\partial P_{DEX}(y/P_0)}{\partial L}=\frac{\partial^2 V^*}{\partial (P_{DEX})^2}\frac{\partial P_{DEX}(y/P_0)}{\partial L}.$$
As observed previously, we have $$\frac{\partial P_{DEX}(y/P_0)}{\partial L}>0.$$
Moreover, $V^*$ is convex in $P_{DEX}(y^*/P_0)$, as it is the maximum over a function that is convex in $P_{DEX}$. Thus, $$\frac{\partial}{\partial L}P^*>0.$$
\end{proof}
\subsection*{Proof of Proposition~\ref{prop2:return}}
\begin{proof} Let $F^1_t$ and $F^2_t$ be two returns distributions such that the $r^1_t$ is second-order stochastic dominated by the $r^2_t$. By definition, for any convex non-decreasing and convex function $\varphi$, $E_{r_1^t\sim F^1_t}[\phi(x)]\leq E_{x\sim F^2_t}[\phi(x)]$. 
Since $r\mapsto \max\{0,\mu+r y-P_{DEX}(y/P_0)\}$ is convex, $E_{r_1^t\sim F^1_t}[\max\{0,\mu+r^1_\tau y-P_{DEX}(y/P_0)\}]\leq E_{r^2_t\sim F^2_t}[\max\{0,\mu+r^2_\tau y-P_{DEX}(y/P_0)\}]$, and so, the inequality holds for $V^*$ by taking max in both sides of the inequality. 
\end{proof}
\subsection*{Proof of Proposition~\ref{prop3:vol}}
Suppose there is no spread between the AMM price and the CEX price $P'_{DEX}(0)=P_0$.  Then the first order condition can be re-written as
$$\lambda(\tfrac{y^*}{\sigma L}-\tfrac{\mu}{\sigma y^*})=2\tfrac{y^*}{\sigma L}.$$
We need to show that the equation has a unique solution in $z^*:=\tfrac{y^*}{\sigma L}-\tfrac{\mu}{\sigma y^*}$ and this solution is monotonic in $\sigma.$ We can re-write the equation as $$\lambda(z^*)=z^*+\sqrt{(z^*)^2+4\tfrac{\mu}{\sigma}}.$$
The right hand side is strictly increasing in $z^*$ and strictly decreasing in $\sigma$. The left hand side is strictly increasing in $z^*$. So, there is at most one point of intersection which is shifting monotonically with $\sigma.$ Similar considerations also work if the DEX under-prices the asset.\qed
\subsection*{Proof of Proposition~\ref{prop4:atomic_mev}}
\begin{proof}
By the envelope theorem, we have
$\frac{\partial}{\partial \mu}V^*=\frac{\partial}{\partial \mu}E[\max\{0,\mu+r_{\tau}y^*-P_{DEX}(y^*/P_0)\}],$
where $y^*$ is the optimal position.
\begin{align*}
\frac{\partial}{\partial \mu}V^*=\frac{\partial}{\partial \mu}\int_{\frac{\Pi_0(y^*)}{y^*}}^{\infty}(\mu+xy^*-P_{DEX}(y^*/P_0))f_{r_\tau}(x)dx\\=\int_{\frac{\Pi_0(y^*)}{y^*}}^{\infty}f_{r_\tau}(x)dx=1-P^*>0.\end{align*}
Moreover, the above implies in particular that  $$P^*=1-\frac{\partial}{\partial \mu}V^*\Rightarrow\frac{\partial}{\partial \mu}P^*=-\frac{\partial^2}{\partial \mu^2}V^*,$$
which means that the exercise probability is decreasing in $\mu$, if and only if value is convex in $\mu$. As the option value is a maximum over functions that are convex in $\mu$, it is convex and the value is decreasing.
\end{proof}


 
\subsection*{Proof of Proposition~\ref{prop5:penalty}}
\begin{proof}
Consider the option value under penalty $p$:
\begin{align*}
    V^*
    &= \max_y \, E\!\left[\max\{-p,\, \mu + ry - P_{DEX}(y/P_0)\}\right] \\
    &= \max_y \Bigl\{ E\!\left[\max\{0,\, \mu + p + ry - P_{DEX}(y/P_0)\}\right] \Bigr\} - p.
\end{align*}
Analogous to proposition \ref{prop4:atomic_mev}, differentiating with respect to $p$, we obtain
\[
    \frac{\partial V^*}{\partial p} = - \Pr\bigl[\mu + p + ry < P_{DEX}(y/P_0)\bigr].
\]
Thus, the option value decreases in $p$. Moreover, note that for fixed $y$, the function 
\[
    p \mapsto \max\{0,\, \mu + p + ry - P_{DEX}(y/P_0)\}
\]
is convex in $p$. Since the pointwise maximum of convex functions is convex, $V^*$ is convex in $p$. It follows that both the option value $V^*$ and the exercise probability $P^*$ are decreasing in the level of the penalty.
\end{proof}

\section{Proof of Dynamic Penalties}
\subsection*{Proof of Proposition~\ref{prop:dynamic_penalty}}
By (A3) and the update rule, $p_t\le p_{\max}+\alpha$ for all $t$; hence we may, wlog, run OGD on the interval $[0,p_{\max}+\alpha]$ (i.e., as a projection onto this set).

\label{appendix:dynamic}
\begin{lemma} Let $d_t=\frac{1}{2}(p_t-p^\star_t)^2$ and $\Delta^\star_t := (p^\star_{t+1}-p^\star_t)$. For any steps $\{\eta_t\}_{t\geq1}$,
\begin{equation*}
    \sum_{t=1}^T \eta_t E[(\alpha-q_t(p_t))(p_t-p^\star_t)]\leq \frac{(p_{\max}+\alpha)^2}{2} + \frac{1}{2}\sum_{t=1}^T \eta_t^2 +\frac{3(p_{\max}+\alpha)}{2}(P^\star_T -1)
\end{equation*}
\end{lemma}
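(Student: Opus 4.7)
The plan is to run a textbook projected online gradient descent (OGD) analysis, but against a moving comparator sequence $\{p_t^\star\}$ rather than a fixed point, and to route the stochasticity through the unbiased gradient identity $E[g_t\mid \mathcal F_{t-1},p_t]=\alpha-q_t(p_t)$. Before starting, I would record two uniform bounds that the rest of the argument leans on: by (A3) together with monotonicity in (A4), every $p_t^\star\in[0,p_{\max}]$; and from $g_t=\alpha-y_t\in[\alpha-1,\alpha]$ we get $|g_t|\le 1$ and (with $\eta_t\le 1$) that the iterates $p_t$ stay in $[0,p_{\max}+\alpha]$, since whenever $p_t\ge p_{\max}$ the gradient equals $\alpha>0$ and pulls the iterate back.

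Next I would exploit nonexpansiveness of $\Pi_{\mathbb R_{\ge 0}}$ applied at the point $p_{t+1}^\star$, writing
\[
(p_{t+1}-p_{t+1}^\star)^2\;\le\;(p_t-\eta_t g_t-p_{t+1}^\star)^2,
\]
then substituting $p_{t+1}^\star=p_t^\star+\Delta_t^\star$ and expanding. After rearranging, this yields the one-step inequality
\[
\eta_t g_t(p_t-p_t^\star)\;\le\;d_t-d_{t+1}-(p_t-p_t^\star)\Delta_t^\star+\tfrac12(\Delta_t^\star)^2+\eta_t g_t\Delta_t^\star+\tfrac12\eta_t^2 g_t^2.
\]
Taking conditional expectation given $\mathcal F_{t-1}$ turns the left-hand side into $\eta_t E[(\alpha-q_t(p_t))(p_t-p_t^\star)]$, using that $p_t$ and $p_t^\star$ are $\mathcal F_{t-1}$-measurable and $E[g_t\mid\mathcal F_{t-1},p_t]=\alpha-q_t(p_t)$.

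Finally I would sum over $t=1,\dots,T$. The telescoping $d_t-d_{t+1}$ collapses to $d_1-d_{T+1}\le d_1\le \tfrac12(p_{\max}+\alpha)^2$, giving the first term of the bound. The quadratic noise term sums to at most $\tfrac12\sum_t\eta_t^2$ because $g_t^2\le 1$. The three remaining drift terms are each bounded in absolute value using $|p_t-p_t^\star|\le p_{\max}+\alpha$, $|\eta_t g_t|\le 1$, and $|\Delta_t^\star|\le p_{\max}$, so each sum is linear in $\sum_t|\Delta_t^\star|=P_T^\star-1$. Collecting their coefficients and using $\alpha<1\le p_{\max}+\alpha$ to consolidate constants produces the stated $\tfrac{3(p_{\max}+\alpha)}{2}(P_T^\star-1)$ term.

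The main obstacle is less conceptual than bookkeeping: one has to keep the a-priori bound on the iterates tight enough (this is where (A3) and $\eta_t\le 1$ are used jointly) and to split the cross-term $\eta_t g_t \Delta_t^\star$ correctly so that it lands in the path-length budget rather than inflating the noise term. Everything else is a direct invocation of standard dynamic-regret OGD machinery (cf.\ \cite{hazan2016introduction}).
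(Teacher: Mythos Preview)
Your approach is essentially the paper's: a projected OGD one-step inequality against the moving comparator, conditional expectation via $E[g_t\mid\mathcal F_{t-1},p_t]=\alpha-q_t(p_t)$, then telescope and bound the drift by the path length. The only substantive difference is \emph{where} you apply nonexpansiveness: you project at $p_{t+1}^\star$ and then substitute $p_{t+1}^\star=p_t^\star+\Delta_t^\star$ on the right, whereas the paper projects at $p_t^\star$ first, obtaining $|p_{t+1}-p_t^\star|^2\le|p_t-\eta_t g_t-p_t^\star|^2$, and only afterwards writes $p_{t+1}-p_{t+1}^\star=(p_{t+1}-p_t^\star)-\Delta_t^\star$ on the left.

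That swap matters for the constant. Your expansion produces \emph{three} drift terms, including the extra $\eta_t g_t\Delta_t^\star$, which forces you to assume $\eta_t\le 1$ (the lemma is stated ``for any steps'') and $p_{\max}+\alpha\ge 1$; even granting both, the coefficients $(p_{\max}+\alpha)+\tfrac12 p_{\max}+1$ do not consolidate to $\tfrac{3}{2}(p_{\max}+\alpha)$ for $\alpha\in(0,1)$. The paper's split yields only two drift terms, $-(p_{t+1}-p_t^\star)\Delta_t^\star$ and $\tfrac12(\Delta_t^\star)^2$, which are bounded by $(p_{\max}+\alpha)|\Delta_t^\star|$ and $\tfrac12(p_{\max}+\alpha)|\Delta_t^\star|$ respectively, giving exactly the stated $\tfrac{3}{2}(p_{\max}+\alpha)$ without any side assumptions on $\eta_t$ or $p_{\max}$. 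Your argument is sound up to this constant; to recover the lemma as written, project at $p_t^\star$ instead.
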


\begin{proof}
Projection yields
\[
|p_{t+1}-p_t^\star|^2
\le |p_t-\eta_t g_t - p_t^\star|^2
= |p_t-p_t^\star|^2 - 2\eta_t g_t(p_t-p_t^\star) + \eta_t^2 g_t^2.
\]
With $p_{t+1}-p_{t+1}^\star=(p_{t+1}-p_t^\star)-\Delta_t^\star$ and $d_t=\tfrac12(p_t-p_t^\star)^2$,
\[
d_{t+1}-d_t
\le
-\eta_t g_t(p_t-p_t^\star) + \tfrac12 \eta_t^2 g_t^2
-(p_{t+1}-p_t^\star)\Delta_t^\star + \tfrac12(\Delta_t^\star)^2.
\]
Since $p_{t+1},p_t^\star\in[0,p_{max}+\alpha]$, $|p_{t+1}-p_t^\star|\le (p_{\max}+\alpha)$ and $|\Delta_t^\star|\le (p_{\max}+\alpha)$, hence
$-(p_{t+1}-p_t^\star)\Delta_t^\star \le (p_{\max}+\alpha)|\Delta_t^\star|$ and $(\Delta_t^\star)^2\le (p_{\max}+\alpha)|\Delta_t^\star|$.
Also $g_t^2\le 1$. Therefore
\[
d_{t+1}-d_t \le -\eta_t g_t(p_t-p_t^\star) + \tfrac12 \eta_t^2 + \tfrac{3(p_{\max}+\alpha)}{2}|\Delta_t^\star|.
\]
Taking expectations, using $E[g_t\mid\mathcal F_{t-1},p_t]=\alpha-q_t(p_t)$, and summing $t=1$ to $T$ gives
\[
\sum_{t=1}^T \eta_t\,E\!\big[(\alpha-q_t(p_t))(p_t-p_t^\star)\big]
\le
E[d_1] + \tfrac12\sum_{t=1}^T \eta_t^2 + \tfrac{3(p_{\max}+\alpha)}{2}\sum_{t=1}^T E|\Delta_t^\star|.
\]
Now $d_1\le (p_{\max}+\alpha)^2/2$ and $\sum_{t=1}^T |\Delta_t^\star|=P_T^\star-1$. 
\end{proof}

By monotonicity of $q_t$ and Lipschitzness, i.e., $q_t$ is nonincreasing and $\mu$-strongly decreasing, we have for all $t$
\[
\big(q_t(p_t)-\alpha\big)_+^2 \leq L (\alpha-q_t(p_t))(p_t-p^\star_t),
\]
and so,
\begin{align*}
    \sum_{t=1}^T \eta_t E[(q_t(p_t)-\alpha)^2] \leq L\left (\frac{(p_{\max}+\alpha)}{2} + \frac{1}{2}\sum_{t=1}^T \eta_t^2 +\frac{3(p_{\max}+\alpha)}{2}(P^\star_T -1) \right).
\end{align*}
By Cauchy-Schwarz,
\begin{align*}
E[C_T]
&= \sum_{t=1}^T \bigl(q_t(p_t)-\alpha\bigr)_+ \\
&\le \left(\sum_{t=1}^T \frac{1}{\eta_t}\right)^\frac{1}{2}
   \left(\sum_{t=1}^T \eta_t\,E\bigl[(q_t(p_t)-\alpha)^2\bigr]\right)^\frac{1}{2} \\
&= O\!\left(T^\frac{3}{4}\sqrt{P_T^\star+\log(T)}\right),
\end{align*}
where the hidden constant depends on $L$ and $p_{\max}$.

Finally, by strong monotonicity, $|p_t-p_t^\star|\leq\frac{1}{\mu} |q_t(p_t)-\alpha|$, so
\[
E[R_T] =  E\left[\sum_{t=1}^T [p_t - p^\star_t]\right] \leq \frac{1}{\mu}  E\left[\sum_{t=1}^T|q_t(p_t)-\alpha|\right] =   O\left (T^\frac{3}{4}\sqrt{P^\star_T+\log(T)}\right).
\]

Now, let's bound the long-run constraint-violation regret. To do so, first observe that $f_t(x_t,p_t) - \alpha \leq \frac{1}{\eta_t} (p_{t+1}-p_t)$. Therefore,
\begin{align*}
    \left[\sum_{t=1}^T (f_t(x_t,p_t) - \alpha)\right]_+ &\leq \left [\sum_{t=1}^T \frac{1}{\eta_t} (p_{t+1}-p_t)\right ]_+ \\
    &=\left [\frac{p_{T+1}}{\eta_{T}} -\frac{p_1}{\eta_1} - \sum_{t=1}^{T-1}\left(\frac{1}{\eta_{t+1}}-\frac{1}{\eta_t}\right)p_{t+1}\right]_+\\
    & \leq p_{T+1}\sqrt{T} = O(\sqrt{T})\text{ since }
p_t\leq p_{\max}+\alpha.
\end{align*}

\section{High-volatility Days}
\label{appendix:high_volatility_days}
\begin{figure}[t]
\centering
\includegraphics[width=1\linewidth]{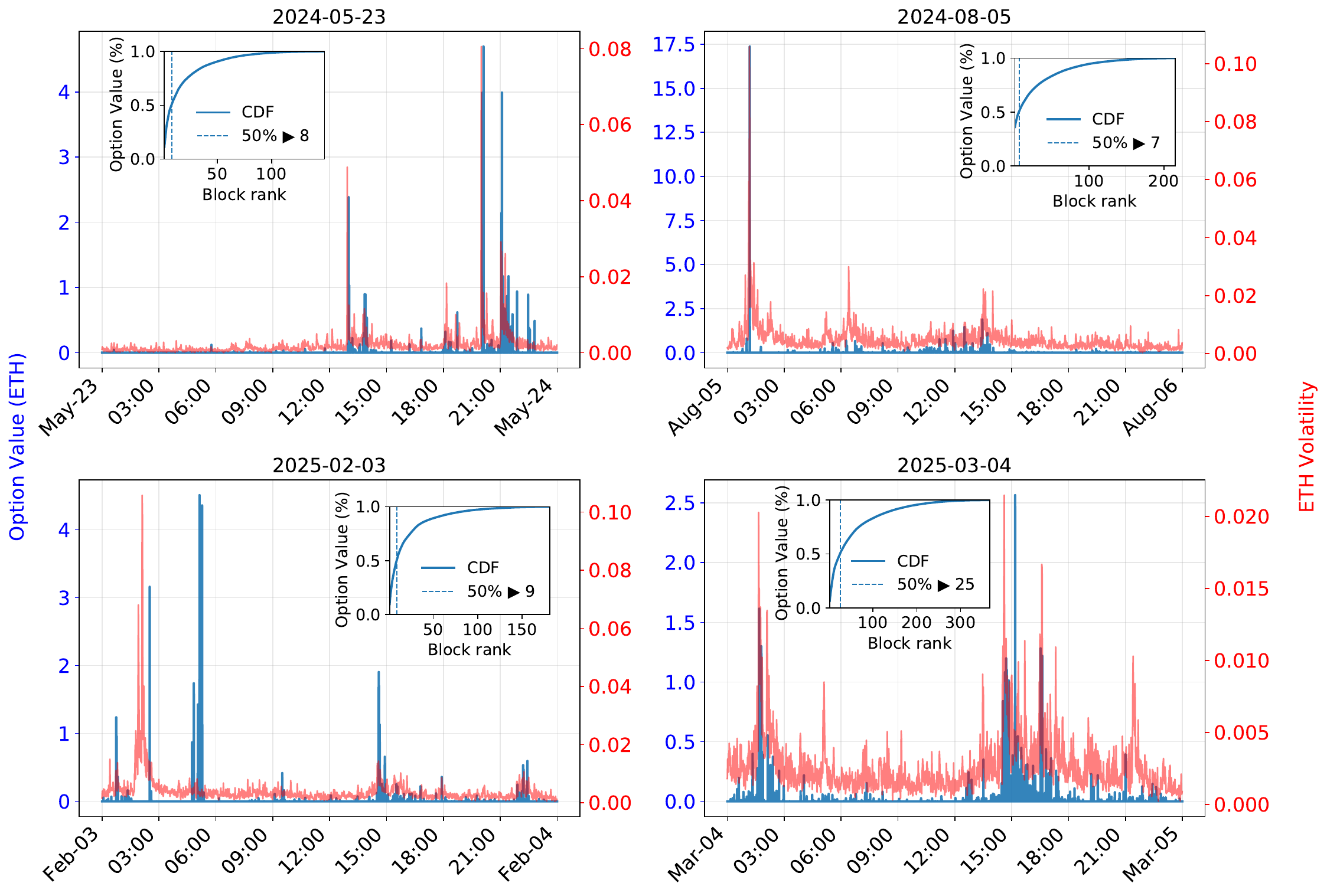}
\caption{Intraday option value and 1-minute ETH volatility on high-volatility days. Insets show the cumulative distribution function (CDF) of daily option value across blocks, illustrating how few blocks account for total option value.}
\label{fig:high_vola_days}
\end{figure}

Four high-volatility days stand out with extreme option exercise probabilities and option values: May 23, 2024; August 5, 2024; February 3, 2025; and March 4,  2025. On each of these dates, the exercise probability exceeded 3.5\% and the total daily option value surpassed 40 ETH.

\Cref{fig:high_vola_days} plots the intraday time series of realized option values alongside ETH price volatility. We see that most bursts of option value are mirrored by spikes in ETH volatility. The figure also highlights the extreme concentration of option value on just a few blocks in these days. For example, on August 5, 2024, only seven blocks accounted for more than 50\% of that day’s total option value. These patterns further reinforce our theory.

An exception occurs around 5 AM on February 3, 2025, where a burst of option value was not mirrored by ETH volatility. Five blocks with extreme option values (block \texttt{21763801, 21763802, 21763803, 21763833, 21763835}) each contain a very large CEX-DEX trade for ETH/USDC with sizes over 1M USD.\footnote{Transaction hash of one of these trades: \href{https://etherscan.io/tx/0xf7fea6501fa6b60769817be1ede76ca1883b098eddf622a8e2106caf6a8fe978}{\texttt{0xf7fea6501fa6b60769817be1ede76ca1883b098eddf622a8e210\\6caf6a8fe978}.}} These trades' positions were profitable around the block time, and the searcher payments were close to the contemporaneous markouts. However, the position deteriorated sharply 8 seconds later, making the options highly profitable. This pattern is best explained by the substantial price impact of the searchers’ own hedge trades on CEX: trade sizes were so large that, by the time hedging was completed, prices had moved against their positions. Thus, the extreme option values at this period were driven by self-imposed CEX price impact rather than general market volatility.

The effects of mitigation strategies discussed in \Cref{sec:mitigations} are pronounced on these volatile days. \Cref{tab:mitigation_shorter_window_volatile_days} presents daily aggregate option values and average option exercise probability on these volatile days under option windows of 2, 4, 6, and 8 seconds. We observe that both the option value and the exercise probability shrink under shorter option windows.

\Cref{tab:mitigation_penalties_volatile_days} shows outcomes under penalties of 0 ETH (no-penalty), 0.0075 ETH, 0.15 ETH, and 0.5 ETH on these days. While penalties substantially reduce the frequency of exercise, they are far less effective at lowering daily aggregate option values. This is because a handful of blocks contain extremely valuable options that dwarf the penalties, rendering them insufficient as a deterrent.

\sisetup{
  table-number-alignment = center,
  table-column-width = 1.1cm, 
  round-mode = places
}

\begin{table}[t]
\caption{Daily aggregate option value and average exercise probability under different option windows (top table) and penalty levels (bottom table) on high-volatility days.}
\label{tab:mitigation_shorter_window_volatile_days}
\centering
\resizebox{\textwidth}{!}{%
\begin{tabular}{
  l c
  S[round-precision=2] S[round-precision=2] S[round-precision=2] S[round-precision=2]
  S[round-precision=1, table-format=2.1]S[round-precision=1, table-format=2.1]S[round-precision=1, table-format=2.1]S[round-precision=1, table-format=2.1]
}
\toprule
 & & \multicolumn{4}{c}{Aggregate Option Value (ETH)} 
   & \multicolumn{4}{c}{Average Exercise Probability (\%)} \\
\cmidrule(lr){3-6} \cmidrule(lr){7-10}
 & {$\tau$} 
  & {8s} & {6s} & {4s} & {2s} 
  & {8s} & {6s} & {4s} & {2s} \\
{Date} \\  
\midrule
2024-05-23 &  & 41.25 & 28.05 & 19.99 & 6.67 & 2.9 & 2.4 & 1.7 & 0.9 \\
2024-08-05 &  & 48.16 & 20.27 & 11.60 & 2.81 & 3.8 & 3.0 & 2.1 & 0.8 \\
2025-02-03 &  & 46.84 & 25.76 & 15.83 & 5.14 & 3.1 & 2.3 & 1.5 & 0.8 \\
2025-03-04 &  & 40.04 & 26.43 & 15.58 & 5.61 & 6.1 & 4.5 & 3.0 & 1.6 \\
\bottomrule
\end{tabular}
}
\end{table}

\sisetup{
  table-number-alignment = center,
  table-column-width = 1.05cm,
  round-mode = places
}

\begin{table}[t]
\label{tab:mitigation_penalties_volatile_days}
\centering
\resizebox{\textwidth}{!}{%
\begin{tabular}{
  l c
  S[round-precision=2] S[round-precision=2] S[round-precision=2] S[round-precision=2]
  S[round-precision=1, table-format=2.1]S[round-precision=1, table-format=2.1]S[round-precision=1, table-format=2.1]S[round-precision=1, table-format=2.1]
}
\toprule
 & & \multicolumn{4}{c}{Aggregate Option Value (ETH)}
   & \multicolumn{4}{c}{Average Exercise Probability (\%)} \\
\cmidrule(lr){3-6} \cmidrule(lr){7-10}
& \multicolumn{1}{l}{Penalty}
  & \multicolumn{1}{c}{0} & \multicolumn{1}{c}{0.075} & \multicolumn{1}{c}{0.15} & \multicolumn{1}{c}{0.5}
  & \multicolumn{1}{c}{0} & \multicolumn{1}{c}{0.075} & \multicolumn{1}{c}{0.15} & \multicolumn{1}{c}{0.5} \\
{Date} & (ETH) \\
\midrule
2024-05-23 &  & 41.25 & 34.18 & 29.96 & 20.21 & 2.9 & 1.4 & 0.9 & 0.4 \\
2024-08-05 &  & 48.16 & 38.13 & 32.72 & 22.30 & 3.8 & 1.6 & 1.0 & 0.3 \\
2025-02-03 &  & 46.84 & 38.68 & 34.47 & 23.25 & 3.1 & 1.3 & 0.8 & 0.4 \\
2025-03-04 &  & 40.04 & 25.87 & 19.97 &  8.84 & 6.1 & 1.8 & 1.1 & 0.3 \\
\bottomrule
\end{tabular}
}
\end{table}

\end{document}